%
%

\documentclass[alpha-refs]{wiley-article}


\usepackage{color}
\usepackage{ulem}
\usepackage{float}

\newtheorem{assumption}{\bf Assumption}



\usepackage{siunitx}

\papertype{Original Article}
\paperfield{Journal Section}

\title{A new model for preferential attachment scheme
with time-varying parameters}



\author[1]{{Bo} {Zhang}}
\author[1]{{Hanyang} {Tian}}
\author[2]{{Guangming} {Pan}}


\affil[1]{University of Science and Technology of China}
\affil[2]{Nanyang Technological University, Singapore}

\corraddress{Bo Zhang, Department of Statistics and Finance, International  Institute  of  Finance, School  of  Management,
University of Science and Technology of China, Hefei, Anhui, 230026, China}
\corremail{wbchpmp@ustc.edu.cn}


\fundinginfo{Chinese National
Natural Science Foundation research grant
No.12001517 \& 72091212; the Double
First-Class Initiative Research Fund YD2040002005;the Fundamental Research Funds
for the Central Universities WK2040000026 \& WK2040000027; MOE Tier 2 Grant 2018-T2-2-112; MOE Tier 1 Grant RG133/18.}

\runningauthor{Zhang et al.}

\begin{document}

\maketitle

\begin{abstract}
We propose an extension of the preferential attachment scheme by allowing the connecting probability to depend on time $t$. 
 We estimate the parameters involved in the model by minimizing the expected squared difference between the number of vertices of degree one and its conditional expectation. The asymptotic properties of the estimators are also investigated when the parameters are time-varying by establishing the central limit theorem (CLT) of the number of vertices of degree one. We propose a new statistic to test whether the parameters have change points. We also offer some methods to estimate the number of change points and detect the locations of change points. Simulations are  conducted to illustrate the performances of the above results.

\keywords{complex network, evolving network, time-varying, change points, preferential attachment, and central limit theorems}
\end{abstract}

\section{Introduction}

There has been a growing interest in complex networks in recent decades since they can describe a wide variety of real-world systems such as the cell, the internet, airline routes, the social network, etc (See \cite{albert2002statistical} and \cite{leskovec2008statistical}). Understanding the complex social and biological processes that generate these data requires the development of new approaches to formulating and testing network hypotheses. Statistical approaches/models are helpful tools to address
 these questions. Some references that discussed statistical inference about networks, such as \cite{yan2013central}, \cite{bhamidi2018change}, \cite{resnick2016asymptotic}, \cite{ren2016tuning}, and \cite{yan2016asymptotics}.

This paper proposes a model  in the framework of the preferential attachment family(see \cite{barabasi1999emergence}), which is one of the most famous models belonging to evolving networks. Here the evolving network refers to a dynamic network that changes in time. It is an essential subfield of complex networks. Many real-world networks are evolving networks since they are constructed over time.  See the recent surveys \cite{boccaletti2014structure}, \cite{holme2012temporal}, and the references therein.

 The Barab$\acute{a}$si--Albert model has two important ingredients: growth and preferential attachment. Growth means that the nodes and edges in the network increase over time. Preferential attachment means that the more connected a node is, the more likely it will receive new links. In other words, the node with a higher degree has a stronger ability to grab links added to the network. Both growth and preferential attachment exist widely in real networks.

 \subsection{The models and motivation}

We start with the initial Barab$\acute{a}$si--Albert model which is constructed as follows.
\begin{itemize}
\item
Starting with a small number ($m_0$) of nodes, at every time step, add a new node with $m$ ($\leq m_0$) edges that link the new node to $m$ different existing nodes in the network.
\item
When choosing the nodes to which the new node connects, assume the probability $\Pi$ that a new node will be connected to node $i$ is proportional to the degree $d_i$ of node $i$.
\end{itemize}
One can find that there are $(m_0+t)$ nodes and $mt$ edges at any time $t$.  Note that the case $m>1$ is similar to the case $m=1$ for such a model. Hence many papers (e.g. \cite{bollobas2004diameter}, \cite{flaxman2005high} ), only considered the case $m=1$ and extended the results to the case $m>1$ easily. We below consider a case when $m=1$ only.

However, the nodes and edges of the real-world systems often do not satisfy such a relationship (the Barab$\acute{a}$si--Albert model). Instead, the system may gain only new edges but no new nodes. Given this, Chung and Lu (see \cite{chung2006complex}) proposed another type of preferential attachment scheme as follows. Denote by $d(u)$ the degree of the vertex $u$. There is an initial graph $G_0$ having an isolated vertex, and the degree of the vertex is 1. Define two operations on the graph. 1) Vertex-step: add a new vertex $v$, and add an edge $\{u,v\}$ from v by randomly and independently choosing $u$ with probability proportional to  $d(u)$ in the current graph. 2) Edge-step: add an edge $\{u,v\}$ by randomly and independently choosing $u$ and $v$ with probability proportional to the $d(u)$ and $d(v)$ in the current graph.
 Begin with the initial graph $G_0$. For $t>0$, at time t, the graph $G_t$ is formed by modifying the graph $G_{t-1}$ as follows: with probability $p_c$, take a vertex-step. Otherwise, take an edge-step.

 One can find that the model reduces to the Barab$\acute{a}$si--Albert model when $p_c=1$. When $p_c<1$ both models have $t$ edges at the time $t$. However, the number of the nodes in  Chung and Lu's scheme at the time $t$ is a random variable in contrast with the number of the nodes in the Barab$\acute{a}$si--Albert model. Since the probabilities of connection in the two models depend on the degrees of the nodes and the number of the edges at the time $t$, they have many similar results, such as power-laws and the CLT of the number of leaves. One may refer to Chung and Lu (\cite{chung2006complex}) for more details.

Chung and Lu's scheme assumes that the probability of taking a vertex-step is always a constant, $p_c$, at any time $t$. However, the probability $p_c$ may change over time in real-world systems. Another issue is that the probability of choosing the existing node $u$ may be proportional to $(d(u)+a_t)$ instead of $d(u)$ alone, where $a_t$ is an unknown parameter. It means that such a probability at time $t$ may depend on the nodes' degrees, the number of the edges and nodes.  

Given the above considerations, we propose a new preferential attachment model. There is an initial graph $G_0$ having an isolated vertex, and the degree of the vertex is one. Define two operations on the graph.
\begin{itemize}
\item
 Vertex-step: add a new vertex $v$, and add an edge $\{u,v\}$ from $v$ by randomly and independently choosing $u$ with probability proportional to  $(d(u)+a_t)$ in the current graph. Here $a_t>-1$ is a bounded number that depends on the time $t$.
\item
Edge-step: add an edge $\{u,v\}$ by randomly and independently choosing $u$ and $v$ with probability proportional to  $(d(u)+a_t)$ and $(d(v)+a_t)$ in the current graph.
\end{itemize}
 Begin with the initial graph $G_0$.
For $t>0$, $G_t$ is the graph at time $t$ by modifying $G_{t-1}$ by either taking a vertex-step or an edge-step.

Let $v_t$ be the number of vertices in the graph $G_t$. We can find that $G_t$ is formed from $G_{t-1}$ by taking a vertex-step when $v_t-v_{t-1}=1$, by taking an edge-step when $v_t-v_{t-1}=0$.
We make two claims about $v_t$:
\begin{assumption}\label{assumptionA1}
$v_t$ is known (or observed) for any $t$.
\end{assumption}
\begin{assumption}\label{assumptionA2}
There exists a positive constant $p$ such that $\frac{v_t}{t} \geq p$ for any $t>0$.
\end{assumption}

Assumption \ref{assumptionA1} automatically holds for the Barab$\acute{a}$si--Albert model (i.e. \cite{bhamidi2018change}). Moreover, we note that the timestamp for each edge is recorded in many real networks(i.e. Wikipedia talk or Twitter users), and it helps us get the value of  $(v_t-v_{t-1})$ for any time $t$.
 Here we would like to point out that we do not claim whether the probability of taking an edge step or a vertex step is independent of time $t$ or dependent on time $t$. Instead, Assumption  \ref{assumptionA2} is imposed to ensure that $v_t$ has at least an order of $t$.

This paper aims to propose a statistic to estimate $a_t$ involved in the connection probability and  develop a statistic to test whether $a_t$ changes over time. Here change point means that the connection probability, which is the probability of choosing an existing vertex $u$ to connect the newly added vertex $\nu$, changes over time. To this end, we first develop the central limit theorem of the leaves. Here a leaf is a vertex with degree one. Subsequently, the statistics are proposed to estimate at and detect change points. In the Barab$\acute{a}$si--Albert model, the number of edges and nodes are known for each time $t$. \cite{resnick2016asymptotic} and \cite{bhamidi2018change} gave some valuable results in this case. However, in both  Chung and Lu's scheme and our preference attachment scheme, the number of the nodes at the time $t$ is a random variable, and hence the case is much more complicated.

\subsection{Organization of the paper}
The paper is organized as follows. Let $x_t$ be the number of vertices with degree one (leaves) in $G_t$. The CLT for leaves $x_t$ and the estimator of $a_t$ without change point are stated in Section 2.1. We give the estimators of $a_t$ in the different intervals with change points in Section 2.2. Section 3 introduces the methods to test and detect change points of $a_t$. Section 4 is to conduct some simulations to demonstrate the performance of the statistics in Sections 2--3. The proofs of the results are in Appendix.

\section{Main results}

\subsection{Estimator of $a_t$ in an interval without change point}

We start with characterizing the magnitudes of the expectation and variance of the leaves $x_t$, the number of vertices with degree one.
\begin{lemma}\label{t1}
Suppose that Assumption ~\ref{assumptionA2} holds. Then
\begin{equation}\label{1.1m}
0< \liminf\limits_{t \rightarrow \infty} \frac{Ex_t}{t} \leq  \limsup\limits_{t \rightarrow \infty} \frac{Ex_t}{t} \leq 1
\end{equation}
and
\begin{equation}\label{1.1v}
0<\liminf\limits_{t \rightarrow \infty} \frac{Var(x_t)}{t} \leq  \limsup\limits_{t \rightarrow \infty} \frac{Var(x_t)}{t} \leq C
\end{equation}
 for a constant C.
\end{lemma}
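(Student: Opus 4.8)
The plan is to analyze the one-step dynamics of $x_t$ through its conditional expectation given the past, treating the observed sequence of step types (equivalently $\{v_s\}$, available by Assumption~\ref{assumptionA1}) as given, so that the only randomness is in the selection of endpoints. Write $\mathcal F_{t-1}$ for the information up to time $t-1$, let $V_t=v_t-v_{t-1}\in\{0,1\}$ be the vertex-step indicator, and set $S_{t-1}=\sum_{u\in G_{t-1}}(d(u)+a_t)=D_{t-1}+v_{t-1}a_t$, where $D_{t-1}=D_0+2(t-1)$ is the total degree. A routine first observation is that, since $a_t>-1$ is bounded and $p(t-1)\le v_{t-1}\le t-1+v_0$ by Assumption~\ref{assumptionA2}, one has $c_1 t\le S_{t-1}\le c_2 t$ for constants $0<c_1\le c_2$; hence the probability that a vertex drawn by the attachment rule is a leaf, $\pi_{t-1}=(1+a_t)x_{t-1}/S_{t-1}$, is of order $x_{t-1}/t$. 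A leaf is created exactly at a vertex-step and is destroyed each time it is selected as an endpoint, so the increment obeys $|x_t-x_{t-1}|\le 2$ and
\[
E[x_t-x_{t-1}\mid\mathcal F_{t-1}]=V_t-(2-V_t)\,\pi_{t-1}+O(1/t),
\]
the $O(1/t)$ absorbing the correction from a repeated endpoint in an edge-step.

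For the expectation bounds, the upper estimate is immediate: $x_t\le v_t\le t+v_0$ gives $\limsup Ex_t/t\le 1$. The lower estimate is the first place that needs care. Taking expectations in the displayed recursion yields a linear recursion $\mu_t=\mu_{t-1}(1-\gamma_t)+r_t+O(1/t)$ with $\mu_t=Ex_t$, $0<\gamma_t\le C/t$, and $r_t$ equal to the probability of a vertex-step; solving it through the integrating factor $\prod_k(1-\gamma_k)^{-1}$ expresses $\mu_t$ as a survival-weighted sum, $\mu_t\ge c\,t^{-C}\sum_{s\le t} r_s\,s^{C}$, where $s^{C}/t^{C}$ is a lower bound for the chance that the leaf born at time $s$ is still present at time $t$. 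Here Assumption~\ref{assumptionA2} must be used in its full, for-all-$t$ strength: the cumulative number of vertex-steps satisfies $\sum_{s\le u} r_s=Ev_u-v_0\ge pu-v_0$ for every $u$, and a short optimization (minimizing the weighted sum subject to this lower envelope, the worst configuration concentrating the vertex-steps into the initial segment $[1,pt]$) gives $\sum_{s\le t} r_s s^{C}\ge c'\,t^{C+1}$. This yields $\liminf Ex_t/t>0$.

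For the variance bounds I would pass to the centered process $y_t=x_t-\mu_t$ and derive a recursion for $\sigma_t^2=E[y_t^2]=Var(x_t)$. Using $y_t=y_{t-1}+(\Delta_t-E\Delta_t)$ with $\Delta_t=x_t-x_{t-1}$, the linear-in-$x_{t-1}$ part of the drift produces exactly the cancellation that is missing from a crude second-moment estimate: once the step types are fixed the denominator $S_{t-1}$ is deterministic, so the conditional drift equals $-\gamma_t x_{t-1}$ plus an $\mathcal F_{t-1}$-deterministic term plus $O(1/t)$, and $E[x_{t-1}y_{t-1}]=\sigma_{t-1}^2$ gives $\sigma_t^2=\sigma_{t-1}^2(1-2\gamma_t)+Var(\Delta_t\mid\mathcal F_{t-1})+O(\sigma_{t-1}/t)$ with $\gamma_t\asymp 1/t$ and $Var(\Delta_t\mid\mathcal F_{t-1})\le 4$. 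Comparing with the ansatz $\sigma_t^2=Kt$ closes the upper bound $\limsup Var(x_t)/t\le C$. The lower bound comes from the same recursion once the per-step conditional variance is bounded below by a positive constant; on a vertex-step this conditional variance equals $\pi_{t-1}(1-\pi_{t-1})$, so what is needed is that $\pi_{t-1}$ stays bounded away from both $0$ and $1$. Boundedness away from $0$ follows from the already-proved $Ex_t\asymp t$ together with the concentration implied by the upper bound $Var(x_t)=O(t)=o(t^2)$, while boundedness away from $1$ follows because the vertices of degree at least two carry a positive fraction of the weight $S_{t-1}$.

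I expect the main obstacle to be the two lower bounds rather than the upper bounds. The delicate point for the mean is that vertex-steps could in principle be front-loaded, so a positive limit cannot be read off from counting recent vertex-steps alone; it is precisely the uniform-in-$t$ form of Assumption~\ref{assumptionA2}, fed through the survival weights $(s/t)^{C}$, that rescues a strictly positive limit. The delicate point for the variance is the mild circularity in bounding the per-step conditional variance below, which I would resolve by establishing the upper variance bound first and then using the resulting concentration of $x_{t-1}/t$ to pin $\pi_{t-1}$ inside a fixed subinterval of $(0,1)$.
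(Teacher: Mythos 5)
Your proposal is correct in substance and shares the paper's skeleton---condition on the observed step types so that the drift of $x_t$ is exactly linear in $x_{t-1}$ with deterministic coefficients, solve the mean recursion with survival weights, and control the variance through the centered one-step recursion---but it deviates in two places worth comparing. For the mean lower bound, the paper (via (\ref{1.1p1})) simply restricts the survival-weighted sum to vertex-steps in $[pt/2,t]$, where Assumption~\ref{assumptionA2} guarantees at least roughly $pt/2$ such steps and the survival product $\prod_{j>pt/2}(1-m_j)[1-(1-y_j)m_j]$ is bounded below by a constant; your front-loading optimization under the cumulative envelope $\sum_{s\le u}r_s\ge pu-v_0$ proves the same bound and correctly isolates why the for-all-$t$ strength of the assumption matters, at the cost of an extra optimization step. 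For the variance lower bound the routes genuinely differ: you bootstrap Chebyshev concentration of $x_{t-1}/t$ from the just-proved upper bound in order to pin $\pi_{t-1}$ inside a compact subinterval of $(0,1)$, whereas the paper avoids concentration entirely---it uses the deterministic inequality $m_tx_{t-1}\le v_{t-1}(1+a_t)/(2t-1+a_tv_{t-1})\le(1+a_t)/(2+a_t)+o(1)$ (its (\ref{1.1p5}); this is exactly your ``non-leaves carry a positive fraction of the weight,'' which you assert but should compute: total degree is $2t-1$ while $x_{t-1}\le v_{t-1}\le t$), so the per-step conditional variance is bounded below by $c\,x_{t-1}/t$ \emph{pointwise}, and then linearity lets $Ex_{t-1}\asymp t$ substitute for concentration when taking expectations. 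Two patchable points in your sketch: first, your per-step variance lower bound is stated only for vertex-steps, and Assumption~\ref{assumptionA2} does not force vertex-steps to occur late (they could all lie in $[0,pT]$), so you must either rerun your envelope argument to show that at least order-$t$ vertex-steps occur after time $pt/2$, or do as the paper does and retain the factor $(2-y_i)\ge 1$ in (\ref{1.1p8}) so that edge-steps contribute comparably; second, the paper's martingale decomposition $x_t-Ex_t=\bigl(\sum_i z_i\bigr)\prod_j(1-m_j)[1-(1-y_j)m_j]$ with orthogonal differences (Lemma~\ref{l2}) makes the variance identity exact and dispenses with your $O(\sigma_{t-1}/t)$ bookkeeping, though your recursion with errors is also adequate.
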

We next establish CLT of degree one vertices $x_t$ by using its conditional expectation and conditional variance as normalizing constants. This CLT is crucial to developing the properties of the estimator of change points to be discussed later.  Define the $\sigma$-field $\Gamma_t=\sigma(x_0,\cdots,x_t)$.
\begin{theorem}\label{t2}
 Suppose that Assumptions \ref{assumptionA1}--\ref{assumptionA2} hold.
 Set $0\leq t_0 < t_1< \cdots\ <t_k=T$.

 If $\min\limits_{1 \leq i \leq k}\lim\limits_{T \rightarrow \infty}\frac{t_i-t_{i-1}}{T}>0$, then
 $$\Big(\frac{x_{t_1}-E(x_{t_1} \mid \Gamma_{t_0})}{[Var(x_{t_1} \mid \Gamma_{t_0})]^{1/2}}, \cdots\,,\frac{x_{t_k}-E(x_{t_k} \mid \Gamma_{t_{k-1}})}{[Var(x_{t_k} \mid \Gamma_{t_{k-1}})]^{1/2}}\Big)$$ converges to $N(0, I_k)$ in distribution.
\end{theorem}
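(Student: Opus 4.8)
The plan is to deduce the joint convergence from the Cram\'er--Wold device, reducing it to a one-dimensional martingale central limit theorem for a single array of bounded martingale differences, and to exploit two structural features of the leaf count: its one-step increments are bounded, and its one-step conditional mean is affine in $x_{t-1}$ with deterministic coefficients. Write $Z_i:=\big(x_{t_i}-E(x_{t_i}\mid\Gamma_{t_{i-1}})\big)\big/[\mathrm{Var}(x_{t_i}\mid\Gamma_{t_{i-1}})]^{1/2}$ for the $i$-th coordinate. First I would record the one-step dynamics. Conditioning on the observed vertex/edge decisions, $x_s-x_{s-1}$ takes finitely many values in a bounded range, and the probability that the size-biased endpoint chosen at step $s$ is a leaf equals $\frac{(1+a_s)x_{s-1}}{2(s-1)+a_s v_{s-1}}$ up to a negligible coincidence term. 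Hence
$$E(x_s\mid\Gamma_{s-1})=(1-c_s)x_{s-1}+b_s,\qquad c_s=O(1/s),$$
with deterministic $b_s,c_s$, while $\mathrm{Var}(x_s\mid\Gamma_{s-1})$ is a bounded function of $x_{s-1}$.

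Second, I would linearize the Doob martingale decomposition of each centered increment. Setting $\xi_s:=x_s-E(x_s\mid\Gamma_{s-1})$ (bounded martingale differences with $E[\xi_s\mid\Gamma_{s-1}]=0$) and iterating the affine recursion $E(x_r\mid\Gamma_s)=(1-c_r)E(x_{r-1}\mid\Gamma_s)+b_r$, the deterministic parts telescope exactly and one obtains
$$x_{t_i}-E(x_{t_i}\mid\Gamma_{t_{i-1}})=\sum_{s=t_{i-1}+1}^{t_i}\alpha_s^{(i)}\,\xi_s,\qquad \alpha_s^{(i)}:=\prod_{r=s+1}^{t_i}(1-c_r).$$
Thus each coordinate is a deterministically weighted sum of the $\xi_s$ over its own interval $(t_{i-1},t_i]$. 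The decisive point is that these intervals are disjoint, so distinct coordinates are assembled from disjoint, mutually orthogonal blocks of martingale differences; this is exactly the mechanism that forces the limiting covariance to be diagonal. Taking conditional expectations in the representation also gives $\mathrm{Var}(x_{t_i}\mid\Gamma_{t_{i-1}})=\sum_{s}(\alpha_s^{(i)})^2E[\xi_s^2\mid\Gamma_{t_{i-1}}]$, which by Lemma~\ref{t1} is of order $t_i-t_{i-1}\asymp T$.

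Third, I would fix $(\lambda_1,\dots,\lambda_k)$ and write $\sigma_i:=[\mathrm{Var}(x_{t_i}\mid\Gamma_{t_{i-1}})]^{1/2}$, which is $\Gamma_{t_{i-1}}$-measurable. The combination becomes a single martingale-difference sum,
$$\sum_{i=1}^k\lambda_i Z_i=\sum_{s=t_0+1}^{T}w_s\,\xi_s,\qquad w_s:=\frac{\lambda_i}{\sigma_i}\,\alpha_s^{(i)}\ \text{ for } s\in(t_{i-1},t_i],$$
and since $t_{i-1}\le s-1$ each weight $w_s$ is $\Gamma_{s-1}$-measurable, so $\{w_s\xi_s\}$ is a bounded martingale-difference array to which the martingale CLT (Hall--Heyde or McLeish) applies. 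Because $\sigma_i\asymp\sqrt{T}$ with high probability and $|\xi_s|$ is bounded, $\max_s|w_s\xi_s|=O(T^{-1/2})$, so the conditional Lindeberg condition holds trivially. It then remains to verify the stability condition $\sum_s w_s^2E[\xi_s^2\mid\Gamma_{s-1}]\to\sum_i\lambda_i^2$ in probability, which by the block structure reduces to $\sigma_i^{-2}\sum_{s\in(t_{i-1},t_i]}(\alpha_s^{(i)})^2E[\xi_s^2\mid\Gamma_{s-1}]\to 1$ in probability for each $i$. Granting this, the martingale CLT yields $\sum_i\lambda_iZ_i\Rightarrow N(0,\sum_i\lambda_i^2)$, and Cram\'er--Wold gives convergence of the vector to $N(0,I_k)$.

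The main obstacle is precisely this stability condition. The target sum conditions each term at time $s-1$, whereas $\sigma_i^2=\sum_s(\alpha_s^{(i)})^2E[\xi_s^2\mid\Gamma_{t_{i-1}}]$ conditions at the start of the block; to show their ratio tends to $1$ I would prove that the per-step conditional variance $E[\xi_s^2\mid\Gamma_{s-1}]$, a bounded function of $x_{s-1}$, concentrates. This reduces to a law of large numbers for the leaf density, namely that $x_s/s$ converges in probability to a deterministic limit, which I would obtain from the affine mean recursion together with the bound $\mathrm{Var}(x_s)=O(s)$ of Lemma~\ref{t1} (so that $x_s/s-Ex_s/s\to 0$ in probability). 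Substituting the deterministic limit of $E[\xi_s^2\mid\Gamma_{s-1}]$ and comparing the two Riemann-type sums over $(t_{i-1},t_i]$ then yields the ratio $1$. Some care is needed with the coincidence term $u=v$ in the edge-step and with the lower-order discrepancies hidden in $b_s,c_s$, but these contribute only $O(1/s)$ per step and do not affect the limit.
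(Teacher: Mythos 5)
Your proposal follows essentially the same route as the paper's own proof: the paper's Lemma \ref{l2} constructs exactly your weighted differences $\alpha_s^{(i)}\xi_s$ (its $z_i$, with weights given by the products $\prod_j(1-m_j)[1-(1-y_j)m_j]$ of the one-step affine contraction coefficients), writes each block increment $x_{t_i}-E(x_{t_i}\mid\Gamma_{t_{i-1}})$ as a sum of these martingale differences over the disjoint interval $(t_{i-1},t_i]$, and concludes by the martingale central limit theorem. Your explicit handling of the stability condition — reconciling the normalization $Var(x_{t_i}\mid\Gamma_{t_{i-1}})$ with the step-by-step predictable quadratic variation via concentration of $x_s/s$ around $Ex_s/s$ (which needs only $Var(x_s)=O(s)$ from Lemma \ref{t1}, not convergence of $Ex_s/s$ itself) — is a step the paper leaves implicit in its one-line appeal to the martingale CLT, so your account is, if anything, more complete than the original.
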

Recall that $a_t$ is involved in the connection probability used in either the edge or vertex step.  We are now able to propose an estimator of $a_t$ when there is no change point in some given interval.  To facilitate the presentation, we introduce some notation.
Define two functions (conditional expectation and conditional variance)
\begin{equation}\label{a1}
f_{t_0,t_1}(a)=E(x_{t_1}\mid\Gamma_{t_0}),\quad g_{t_0,t_1}(a)=Var(x_{t_1}\mid\Gamma_{t_0}),
\end{equation}
 where $a_t=a$ for $0 \leq t_0 \leq t \leq t_1$.

We note an important fact that the conditional expectation $E(X|Y)$ is the best predictor of the random variable $X$ in terms of the minimum mean square error criterion when another random variable $Y$ is available. Inspired by it, we propose an estimator of $a_t$ as follows,
\begin{equation}\label{1.01t42gAE}
\hat{a}=\arg\min_{a}(f_{t_0,t_1}(a)-x_{t_1})^2
\end{equation}
when there is no change point in the interval $[t_0,t_1]$. We want to point out that (\ref{1.01t42gAE}) could still yield an estimator by treating $a_t$ as a constant even if it is not in the interval $[t_0,t_1]$. We later prove in Theorems~\ref{t5} and \ref{t6} that the estimators obtained in these two different scenarios behave differently.

We now give a recursive relation between the leaves $x_t$ in Lemma~\ref{l1} below to evaluate $f_{t_0,t_1}(a)$ and $g_{t_0,t_1}(a)$ via numerical analysis schemes. To this end, we first introduce some auxiliary notation.

%


\begin{definition}\label{de1} Let $m_t=\frac{1+a_t}{2t-1+a_tv_{t-1}}$ and $y_t=v_t-v_{t-1}$ where $v_t$ is the number of vertices in the graph. Set
\begin{equation}\label{01.1}
A_t=\left(
  \begin{array}{ccc}
    1-2y_tm_t+(1-y_t)(4m_t^2-4m_t) & (1-y_t)(2m_t-3m_t^2)+y_t(2-m_t) & y_t\\
    0 & (1-m_t)(1-(1-y_t)m_t) & y_t\\
  0 & 0 &1\\
  \end{array}
\right).
\end{equation}
\end{definition}
\begin{lemma}\label{l1}
Let ${\tilde{\mathbf{x}}_t}=(x_t^2,x_t,1)'$,
\begin{equation}\label{1.1}
E({\tilde{\mathbf{x}}_{t+1}} \mid \Gamma_{t})=A_{t+1}{\tilde{\mathbf{x}}_{t}}.
\end{equation}
\end{lemma}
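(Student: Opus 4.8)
The plan is to condition on the type of the $(t+1)$-th step and to compute the first two conditional moments of $x_{t+1}$ directly from the evolution rule. By Assumption~\ref{assumptionA1} the vertex counts $\{v_s\}$, and hence the increments $y_s=v_s-v_{s-1}\in\{0,1\}$, are observed, so I would treat $y_{t+1}$ as a known constant; it then suffices to verify (\ref{1.1}) separately on the event of a vertex-step ($y_{t+1}=1$) and on the event of an edge-step ($y_{t+1}=0$), reading off the two non-trivial rows of $A_{t+1}$ in each case (the third coordinate being the trivial identity $1=1$). The one preliminary fact needed is the selection probability of a given leaf: since every step adds exactly one edge and $G_0$ contributes total degree $1$, the graph $G_t$ has total degree $2t+1$ on $v_t$ vertices, so the normalizing weight at step $t+1$ is $\sum_u(d(u)+a_{t+1})=2t+1+a_{t+1}v_t$. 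A leaf has weight $1+a_{t+1}$ and is therefore chosen as any prescribed endpoint with probability $m_{t+1}=(1+a_{t+1})/(2t+1+a_{t+1}v_t)$. Write $m=m_{t+1}$ below.

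For a vertex-step I would note that the fresh vertex is a new leaf $(+1)$ while the chosen neighbour is an old leaf with probability $x_t m$, in which case that leaf disappears; hence $x_{t+1}=x_t+1-B$ with $B\mid\Gamma_t\sim\mathrm{Bernoulli}(x_tm)$. Using $B^2=B$ I would obtain
\[
E(x_{t+1}\mid\Gamma_t)=(1-m)x_t+1,\qquad E(x_{t+1}^2\mid\Gamma_t)=(1-2m)x_t^2+(2-m)x_t+1,
\]
which are exactly the $y=1$ specializations of the second and first rows of $A_{t+1}$.

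For an edge-step no vertex is added and $x_{t+1}=x_t-R$, where $R=\sum_{\ell}I_\ell$ with $I_\ell=\mathbf{1}\{\ell\in\{u,v\}\}$ counting the leaves that become endpoints of the new edge (a self-loop $u=v$ on a leaf still removes it once). Since $u,v$ are chosen independently, inclusion--exclusion gives $E(I_\ell\mid\Gamma_t)=2m-m^2$, and for distinct leaves $\ell\neq\ell'$ the event $I_\ell I_{\ell'}=1$ forces $\{u,v\}=\{\ell,\ell'\}$, so $E(I_\ell I_{\ell'}\mid\Gamma_t)=2m^2$. Hence $E(R\mid\Gamma_t)=(2m-m^2)x_t$ and, using $I_\ell^2=I_\ell$, $E(R^2\mid\Gamma_t)=(2m-m^2)x_t+2m^2x_t(x_t-1)$; substituting into $x_{t+1}=x_t-R$ would give
\[
E(x_{t+1}\mid\Gamma_t)=(1-m)^2x_t,\qquad E(x_{t+1}^2\mid\Gamma_t)=(1-4m+4m^2)x_t^2+(2m-3m^2)x_t,
\]
the $y=0$ specializations of the corresponding rows of $A_{t+1}$. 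Collecting the two cases shows that both coordinates of $E(\tilde{\mathbf{x}}_{t+1}\mid\Gamma_t)$ are the affine functions of $\tilde{\mathbf{x}}_t$ encoded by $A_{t+1}$, which is (\ref{1.1}).

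I expect the only delicate point to be the edge-step second moment: one must correctly account for the ordered cross terms over distinct pairs of leaves and for the self-loop possibility $u=v$, since together these produce the quadratic coefficient $1-4m+4m^2$; getting the base selection probability $m_{t+1}$ right (i.e. the total-degree bookkeeping that yields the denominator $2t+1+a_{t+1}v_t$) is the other place where an off-by-one could slip in. Everything else is a routine moment computation.
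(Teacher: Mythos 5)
Your proposal is correct and takes essentially the same approach as the paper: the paper likewise conditions on the step type and writes down the four-point conditional law of $x_{t+1}-x_t$ (atoms $+1,0,-1,-2$, with the $-1$ case including the self-loop on a leaf that your indicator argument also handles), from which the same first and second conditional moments follow, so your Bernoulli/indicator bookkeeping is just a repackaging of that computation. All your coefficients agree with the rows of $A_{t+1}$, and your normalization $m_{t+1}=(1+a_{t+1})/(2t+1+a_{t+1}v_t)$ is the correct one --- the paper's proof actually writes $m_t,y_t$ where $m_{t+1},y_{t+1}$ are meant, a harmless index slip that your version avoids.
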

\begin{proposition}
When $a_t=a$ for $0 \leq t_0 \leq t \leq t_1$, $m_t=\frac{1+a}{2t-1+av_{t-1}}$. Hence  (\ref{1.1}) implies that
\begin{equation}\label{1.1bb}
E({\tilde{\mathbf{x}}_{t_1}} \mid \Gamma_{t_0})=\prod_{t=t_0+1}^{t_1} A_{t}{\tilde{\mathbf{x}}_{t_0}}.
\end{equation}
 (\ref{1.1bb}) shows that when $v_t$(or $y_t$) is observed for any $t$, $f_{t_0,t_1}(a)$ and $g_{t_0,t_1}(a)$ only depend on $a$ and $x_{t_0}$. Thus, we can calculate $f_{t_0,t_1}(a)$ and $g_{t_0,t_1}(a)$ with the given $x_{t_0}$ and $v_t$. When $t_0=0$, $x_{t_0}=1$ is non-random. When $t_0>0$, $x_{t_0}$ is random so that  $f_{t_0,t_1}(a)$ and $g_{t_0,t_1}(a)$ are random. However, Theorem~\ref{t2} could provide the convergence of $x_{t_0}$ with $t_0$ tending to infinity by considering the interval $[0,t_0]$. Thus we can get some asymptotic properties of   $f_{t_0,t_1}(a)$ and $g_{t_0,t_1}(a)$ with probability tending to 1.
\end{proposition}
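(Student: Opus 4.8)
The statement is essentially a corollary of Lemma~\ref{l1}, and the plan is to obtain the product formula (\ref{1.1bb}) by iterating the one-step recursion (\ref{1.1}) through the tower property of conditional expectation. First I would record the elementary substitution: since $a_t \equiv a$ on the interval $t_0 \le t \le t_1$, plugging $a_t = a$ into the defining relation $m_t = \frac{1+a_t}{2t-1+a_t v_{t-1}}$ of Definition~\ref{de1} gives $m_t = \frac{1+a}{2t-1+a v_{t-1}}$ at once, and correspondingly each matrix $A_t$ in (\ref{01.1}) becomes a function of $a$ and of the observed quantities $v_{t-1}, v_t$ only.

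The heart of the argument is the following iteration. For any $t_0 \le s < t_1$, since $\Gamma_{t_0} \subseteq \Gamma_{s}$, the tower property yields
$$E(\tilde{\mathbf{x}}_{s+1} \mid \Gamma_{t_0}) = E\big(E(\tilde{\mathbf{x}}_{s+1} \mid \Gamma_{s}) \mid \Gamma_{t_0}\big) = E\big(A_{s+1} \tilde{\mathbf{x}}_{s} \mid \Gamma_{t_0}\big),$$
where the last equality is (\ref{1.1}). The key observation is that, by Assumption~\ref{assumptionA1}, the sequence $\{v_t\}$ — and hence every $y_t = v_t - v_{t-1}$ and every $m_t$ — is known, so each $A_{s+1}$ is a deterministic matrix; it may therefore be pulled outside the conditional expectation to give $E(\tilde{\mathbf{x}}_{s+1} \mid \Gamma_{t_0}) = A_{s+1} E(\tilde{\mathbf{x}}_{s} \mid \Gamma_{t_0})$. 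Applying this identity successively for $s = t_1 - 1, t_1 - 2, \ldots, t_0$ and using $E(\tilde{\mathbf{x}}_{t_0} \mid \Gamma_{t_0}) = \tilde{\mathbf{x}}_{t_0}$ produces $E(\tilde{\mathbf{x}}_{t_1} \mid \Gamma_{t_0}) = A_{t_1} A_{t_1-1}\cdots A_{t_0+1}\,\tilde{\mathbf{x}}_{t_0}$, which is exactly (\ref{1.1bb}) once the product $\prod_{t=t_0+1}^{t_1} A_t$ is read in the order dictated by the recursion; here I would take care to fix the non-commutative ordering of the factors (the leftmost factor carrying the largest index) and verify it against the direction in which the tower property unwinds.

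The point on which the whole computation rests — and what I regard as the only genuine subtlety rather than bookkeeping — is the deterministic nature of the $A_t$. The matrix $A_{s+1}$ involves $y_{s+1}$ and $m_{s+1}$, which depend on $v_s, v_{s+1}$, i.e. on information at times later than $t_0$; absent Assumption~\ref{assumptionA1} these would not be $\Gamma_{t_0}$-measurable and could not be extracted from $E(\cdot\mid\Gamma_{t_0})$. It is precisely the observability of $v_t$ that turns the $A_t$ into constants and makes (\ref{1.1bb}) a purely linear-algebraic iteration.

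Finally, to read off the stated consequences for $f_{t_0,t_1}(a)$ and $g_{t_0,t_1}(a)$, I would extract components of the vector identity (\ref{1.1bb}): $f_{t_0,t_1}(a) = E(x_{t_1}\mid\Gamma_{t_0})$ is its second coordinate, while $g_{t_0,t_1}(a) = E(x_{t_1}^2\mid\Gamma_{t_0}) - (E(x_{t_1}\mid\Gamma_{t_0}))^2$ is formed from the first and second coordinates. Since the product matrix depends only on $a$ and the observed $\{v_t\}$, and $\tilde{\mathbf{x}}_{t_0} = (x_{t_0}^2, x_{t_0}, 1)'$ depends only on $x_{t_0}$, both functions depend on $a$ and $x_{t_0}$ alone. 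For $t_0 = 0$ the initial graph forces $x_0 = 1$, so they are deterministic; for $t_0 > 0$ the randomness enters only through $x_{t_0}$, whose asymptotic behaviour is controlled by applying Theorem~\ref{t2} on the interval $[0,t_0]$, which gives the claimed convergence with probability tending to one.
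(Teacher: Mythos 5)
Your proposal is correct and follows essentially the same route as the paper, which treats (\ref{1.1bb}) as an immediate iteration of Lemma~\ref{l1}: under Assumption~\ref{assumptionA1} the matrices $A_t$ are deterministic, so the tower property unwinds the one-step recursion (\ref{1.1}) into the ordered product, and the claims about $f_{t_0,t_1}(a)$ and $g_{t_0,t_1}(a)$ follow by reading off coordinates of $\prod_{t=t_0+1}^{t_1}A_t\,\tilde{\mathbf{x}}_{t_0}$ and invoking Theorem~\ref{t2} on $[0,t_0]$ for the randomness of $x_{t_0}$. Your explicit attention to the non-commutative ordering of the factors and to why observability of $v_t$ is what licenses pulling $A_{s+1}$ out of $E(\cdot\mid\Gamma_{t_0})$ are worthwhile details the paper leaves implicit, but they do not constitute a different argument.
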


The consistency of the estimate $\hat{a}$ and its fluctuation are built as follows.
\begin{theorem}\label{t5}
 Suppose that Assumptions~\ref{assumptionA1}--\ref{assumptionA2} hold. There are k intervals $B_1, \cdots\,, B_k$ with $B_i=[s_i,t_i]$ satisfing that $0 \leq s_1< t_1<s_2< \cdots\ <s_{k}<t_k=T$.  Assume that $\min\limits_{1 \leq i \leq k}\lim\limits_{T \rightarrow \infty}\frac{t_i-s_i}{T}>0$ and $a_t=\alpha_i$ for any $t \in B_i$. Define the functions $f_{s_i,t_i}(a)$ and $g_{s_i,t_i}(a)$ as in (\ref{a1}). When $T \rightarrow \infty$, there is the unique $\hat{\alpha_i}$ such that $f_{s_i,t_i}(\hat{\alpha}_i)=x_{t_i}$ with probability tending to one.
 Moreover
 $$\Big(\frac{f_{s_1,t_1}'(\hat{\alpha}_1)(\hat{\alpha}_1-\alpha_1)}{g^{1/2}_{s_1,t_1}(\hat{\alpha}_1)}, \cdots\, ,\frac{f_{s_k,t_k}'(\hat{\alpha}_k)(\hat{\alpha}_k-\alpha_k)}{g^{1/2}_{s_k,t_k}(\hat{\alpha}_k)}\Big)$$ converges to $N(0, I_k)$ in distribution.
\end{theorem}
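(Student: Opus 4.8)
The plan is to separate the statement into two tasks: (i) existence and uniqueness of the root $\hat\alpha_i$ of $f_{s_i,t_i}(a)=x_{t_i}$, and (ii) the joint fluctuation, with the CLT of Theorem~\ref{t2} supplying the limiting distribution at the end. The device that binds the argument together is the pair of exact identities $f_{s_i,t_i}(\hat\alpha_i)=x_{t_i}$ (definition of the estimator) and $f_{s_i,t_i}(\alpha_i)=E(x_{t_i}\mid\Gamma_{s_i})$ (definition of $f$ at the true parameter). Subtracting the two converts the estimation error $\hat\alpha_i-\alpha_i$ into the centered leaf count $x_{t_i}-E(x_{t_i}\mid\Gamma_{s_i})$, which is precisely the quantity whose fluctuations Theorem~\ref{t2} controls.

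First I would establish the analytic behaviour of $a\mapsto f_{s_i,t_i}(a)$ and $a\mapsto g_{s_i,t_i}(a)$. By the recursion (\ref{1.1bb}), both are read off from the matrix product $\prod_{t=s_i+1}^{t_i}A_t$ applied to $\tilde{\mathbf{x}}_{s_i}$, and since every entry of $A_t$ in (\ref{01.1}) is a rational function of $a$ whose denominator stays bounded away from zero (using $a>-1$ and Assumption~\ref{assumptionA2}), $f_{s_i,t_i}$ and $g_{s_i,t_i}$ are smooth in $a$ on the admissible range. I would then differentiate the product in $a$ and prove that $f_{s_i,t_i}$ is strictly monotone with $|f_{s_i,t_i}'(a)|$ of exact order $t_i-s_i\asymp T$, uniformly for $a$ near $\alpha_i$. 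Monotonicity yields at most one root, and the two-sided bound on the derivative is what later fixes the rate. This is the natural analytic counterpart of the heuristic that increasing $a$ flattens the attachment mechanism and thus moves the expected number of leaves in a single direction.

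Next I would combine existence with consistency. Lemma~\ref{t1} gives $g_{s_i,t_i}(\alpha_i)=\mathrm{Var}(x_{t_i}\mid\Gamma_{s_i})=O(T)$, so Theorem~\ref{t2} applied to $[s_i,t_i]$ yields $x_{t_i}-f_{s_i,t_i}(\alpha_i)=O_p(\sqrt T)$. Since $f_{s_i,t_i}(\alpha_i)$ is exactly this conditional mean while $|f_{s_i,t_i}'|\asymp T$, the intermediate value theorem places a (unique, by monotonicity) root $\hat\alpha_i$ within an $O_p(T^{-1/2})$ neighbourhood of $\alpha_i$ with probability tending to one; in particular $\hat\alpha_i\to\alpha_i$ in probability. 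The only randomness in $f_{s_i,t_i},g_{s_i,t_i}$ beyond the observed $v_t$ enters through $\tilde{\mathbf{x}}_{s_i}$, i.e. through $x_{s_i}$, whose fluctuations are themselves governed by Theorem~\ref{t2} on $[0,s_i]$; I would use this to make the orders above hold uniformly with probability tending to one. A mean value expansion then gives $f_{s_i,t_i}'(\tilde\alpha_i)(\hat\alpha_i-\alpha_i)=x_{t_i}-E(x_{t_i}\mid\Gamma_{s_i})$ for some $\tilde\alpha_i$ between $\hat\alpha_i$ and $\alpha_i$, so that
\[
\frac{f_{s_i,t_i}'(\hat\alpha_i)(\hat\alpha_i-\alpha_i)}{g^{1/2}_{s_i,t_i}(\hat\alpha_i)}
=\frac{f_{s_i,t_i}'(\hat\alpha_i)}{f_{s_i,t_i}'(\tilde\alpha_i)}\cdot\frac{g^{1/2}_{s_i,t_i}(\alpha_i)}{g^{1/2}_{s_i,t_i}(\hat\alpha_i)}\cdot\frac{x_{t_i}-E(x_{t_i}\mid\Gamma_{s_i})}{g^{1/2}_{s_i,t_i}(\alpha_i)}.
\]
Consistency of $\hat\alpha_i$ (hence of $\tilde\alpha_i$) together with the smoothness of $f_{s_i,t_i}'$ and $g_{s_i,t_i}$ from step one forces the first two factors to $1$ in probability. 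The vector of third factors coincides with the $t_i$-marginals in the setting of Theorem~\ref{t2} for the ordered points $s_1<t_1<\cdots<s_k<t_k$ (the point immediately preceding $t_i$ being $s_i$); since the $B_i$ are disjoint, the conditioning structure underlying Theorem~\ref{t2} makes these standardized increments asymptotically independent, so jointly $N(0,I_k)$. Slutsky together with a Cram\'er--Wold reduction then transfers this to the target vector.

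I expect the main obstacle to be step two: controlling the $a$-derivative of the long matrix product $\prod A_t$ and proving the sharp two-sided bound $|f_{s_i,t_i}'(a)|\asymp T$ uniformly near $\alpha_i$. Differentiating a product of $t_i-s_i$ many $a$-dependent matrices produces a sum of that many terms, and showing that this sum neither degenerates nor blows up — that its leading growth is genuinely linear in $T$ and of a fixed sign — is the delicate estimate on which both uniqueness and the normalization rest. A secondary subtlety is checking that the auxiliary randomness entering through $x_{s_i}$ does not perturb these orders, which I would handle by conditioning on $\Gamma_{s_i}$ and reusing Theorem~\ref{t2} on $[0,s_i]$.
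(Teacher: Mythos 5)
Your proposal is correct and follows essentially the same route as the paper: the paper's Lemma~\ref{ll1} supplies exactly your step one (monotonicity, smoothness, and the two-sided order-$T$ bounds on $f'_{s_i,t_i}$, $f''_{s_i,t_i}$, $g_{s_i,t_i}$, $g'_{s_i,t_i}$, obtained from the scalar recursion $f_{t_0,t}(a)=(1-m_t)[1-(1-y_t)m_t]f_{t_0,t-1}(a)+y_t$ rather than the matrix product, with the delicate lower bound $|f'|\gtrsim T$ handled through a recursive inequality for the ratio $|f'_{t_0,t}(a)|/f_{t_0,t}(a)$, and the randomness of $x_{s_i}$ dispatched by Theorem~\ref{t2} on $[0,s_i]$ just as you suggest), while Lemma~\ref{t4} runs your intermediate-value-plus-monotonicity existence argument and the same mean-value/Slutsky expansion (writing the statistic through $f'(\xi_0)$, $f''(\xi_1)$ and $g(\hat a)/g(a)\to 1$), before concluding jointly via the martingale structure underlying Theorem~\ref{t2}. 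The differences are presentational only.
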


The orders of the functions $f_{t_0,t_1}(a)$ and $g_{t_0,t_1}(a)$ and their respective derivatives are $T$ with probability tending to 1. We specify them in the supplementary material. When there are no change points in the respective intervals (i.e.,$\alpha_t$ is a constant in each subinterval) , Theorem~\ref{t5} states the asymptotic joint distribution for the estimates $\hat{\alpha}_i,i=1,\cdots\,,k$ which are the respective
estimates of $\alpha_t$ in the $k$ subintervals $B_1,\cdots\,,B_k$.

\subsection{The behavior of $\hat{a}$ in the interval with change points}
Theorem~\ref{t5} establishes the central limit theorem for $\hat{a}$ when $a_t$ does not change in an interval. We next investigate the behavior of the estimate $\hat{a}$ when there is a change point in the interval.

The behavior of the estimate $\hat{a}$ obtained from (\ref{1.01t42gAE}) when there is a change point $R$ in an interval $B=[S,T]$ is summarized in the following result.
\begin{theorem}\label{t6}
 Consider an interval $B=[S,T]$ which satisfies that $0 \leq S <R <T$. For any $t \in [S,R) $, $a_t=\alpha_1$, while $a_t=\alpha_2$ for $t \in [R,T] $ with $|\alpha_2-\alpha_1|>0$. Define the function $f_{S,T}(a)$ as before and obtain $\hat{a}$ from (\ref{1.01t42gAE}). Then $\hat{a}$ is unique and $f_{S,T}(\hat{a})=x_{T}$ with probability tending to one when $T \rightarrow \infty$. Moreover
\begin{itemize}
\item[(a)]
if $\lim\limits_{T \rightarrow \infty}\frac{T-R}{T}>0$ and $\lim\limits_{T \rightarrow \infty}\frac{R-S}{T}>0$, then there exists $c>0$ independent of $T$ such that   
$$\min \{|\hat{a}-\alpha_1|, |\hat{a}-\alpha_2| \}>c>0,$$
with probability tending to one when $T \rightarrow \infty$;
\item[(b)]
if $\lim\limits_{T \rightarrow \infty}\frac{T-R}{T}>0$ and $\lim\limits_{T \rightarrow \infty}\frac{R-S}{T}=0$, then  with probability tending to one when $T \rightarrow \infty$
 $$\hat{a} \rightarrow \alpha_2;$$

\item[(c)]
if $\lim\limits_{T \rightarrow \infty}\frac{R-S}{T}>0$ and $\lim\limits_{T \rightarrow \infty}\frac{T-R}{T}=0$, then with probability tending to one when $T \rightarrow \infty$
$$\hat{a} \rightarrow \alpha_1.$$
\end{itemize}
\end{theorem}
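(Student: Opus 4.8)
The plan is to reduce the whole statement to a comparison, at the level of conditional means, between the constant-$a$ dynamics defining $f_{S,T}(a)$ and the genuine two-phase dynamics. Write $\mu_T^\ast=E(x_T\mid\Gamma_S)$ for the true conditional mean, the parameter being $\alpha_1$ on $[S,R)$ and $\alpha_2$ on $[R,T]$. By Lemma~\ref{l1} the mean obeys the affine recursion $E(x_t\mid\Gamma_{t-1})=b_t x_{t-1}+y_t$ with $b_t=(1-m_t)\big(1-(1-y_t)m_t\big)$ and $m_t=\frac{1+a_t}{2t-1+a_t v_{t-1}}$ governed by the parameter in force at step $t$. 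Iterating exhibits both $f_{S,T}(a)$ (constant $a$) and $\mu_T^\ast$ (two-phase) as affine functionals of $x_S$ driven by the common sequence $y_t$ but by different coefficient sequences $b_t$. First I would record $\partial m_t/\partial a=(2t-1-v_{t-1})/(2t-1+av_{t-1})^2>0$ (using $v_{t-1}\le t$), so that $b_t$ is strictly decreasing in $a$; combined with the orders $f_{S,T}(a),f_{S,T}'(a),g_{S,T}(a)=\Theta(T)$ established for Theorem~\ref{t5}, this gives strict monotonicity of $a\mapsto f_{S,T}(a)$, hence uniqueness of $\hat a$, and existence once $x_T$ lies in the range of $f_{S,T}$, which holds with probability tending to one by the concentration used below.

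The heart of the argument is two order estimates. Since the pure-$\alpha_1$ run and the true run share their dynamics on $[S,R]$, they coincide at time $R$ and afterwards diverge only through the $\alpha_2$-versus-$\alpha_1$ gap on $[R,T]$. Writing the difference recursion $\delta_t=b_t^{(\alpha_2)}\delta_{t-1}+\big(b_t^{(\alpha_2)}-b_t^{(\alpha_1)}\big)x_{t-1}$ and using $b_t^{(\alpha_2)}-b_t^{(\alpha_1)}=\Theta(1/t)$ with a fixed sign (by the monotonicity above, so the forcing terms never cancel), $x_t=\Theta(t)$, and $\prod_{s=t+1}^T b_s=\Theta(1)$ whenever the indices are of order $T$, I obtain $|\mu_T^\ast-f_{S,T}(\alpha_1)|\asymp(T-R)$. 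Comparing instead the pure-$\alpha_2$ run with the true run, which differ only on $[S,R]$ and then evolve with common $\alpha_2$-dynamics on $[R,T]$, the gap of order $R-S$ accumulated on $[S,R]$ is merely propagated (multiplied by a factor in $(0,1]$) across $[R,T]$, giving $|\mu_T^\ast-f_{S,T}(\alpha_2)|\asymp(R-S)$; the constants in both estimates are positive multiples of $|\alpha_2-\alpha_1|$. A monotone comparison of the two affine recursions ($b_t$ monotone in $a$, common forcing) also sandwiches $\mu_T^\ast$ between $f_{S,T}(\alpha_1)$ and $f_{S,T}(\alpha_2)$, so $\hat a\in[\alpha_1\wedge\alpha_2,\alpha_1\vee\alpha_2]$ with probability tending to one.

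The three regimes then follow by a mean value argument. Concentration $x_T=\mu_T^\ast+O_P(\sqrt T)$ follows from $Var(x_T\mid\Gamma_S)=g_{S,T}=\Theta(T)$ (Lemma~\ref{t1} with the second-moment row of $A_t$) and Chebyshev, so that $x_T-f_{S,T}(\alpha_i)=\big(\mu_T^\ast-f_{S,T}(\alpha_i)\big)+O_P(\sqrt T)$. Writing $x_T-f_{S,T}(\alpha_i)=f_{S,T}'(\xi_i)(\hat a-\alpha_i)$ and dividing by $f_{S,T}'(\xi_i)=\Theta(T)$ yields $|\hat a-\alpha_i|=|x_T-f_{S,T}(\alpha_i)|/\Theta(T)$. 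In case (a) both gaps are $\Theta(T)$, so $|\hat a-\alpha_1|$ and $|\hat a-\alpha_2|$ are each bounded below by a positive constant, which is the claim $\min\{|\hat a-\alpha_1|,|\hat a-\alpha_2|\}>c$. In case (b), $R-S=o(T)$ forces $|x_T-f_{S,T}(\alpha_2)|=o(T)+O_P(\sqrt T)=o_P(T)$, whence $\hat a\to\alpha_2$; case (c) is the mirror image, with $T-R=o(T)$ making $|x_T-f_{S,T}(\alpha_1)|=o_P(T)$ and $\hat a\to\alpha_1$.

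The step I expect to be the main obstacle is the two-sided control of the discrepancies $|\mu_T^\ast-f_{S,T}(\alpha_i)|$: the upper bounds are routine telescoping, but the lower bounds require that the forcing terms $b_t^{(\alpha_2)}-b_t^{(\alpha_1)}$ do not cancel and that the compounding factors $\prod_s b_s$ stay of order one rather than decaying polynomially in $T/R$. Both rely on the two-sided bound $v_t\asymp t$ (Assumption~\ref{assumptionA2} below, $v_t\le t$ above) and on verifying that $f_{S,T}'(a)=\Theta(T)$ uniformly over $[\alpha_1\wedge\alpha_2,\alpha_1\vee\alpha_2]$, so that no loss occurs at the mean value point $\xi_i$.
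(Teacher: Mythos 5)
Your proposal is correct and takes essentially the same route as the paper's proof: you sandwich the true conditional mean $E(x_T\mid\Gamma_S)$ between $f_{S,T}(\alpha_1)$ and $f_{S,T}(\alpha_2)$ via monotonicity (the paper does this through the two-parameter function $f_{S,R,T}(\alpha_1,\alpha_2)$ with negative partial derivatives), show the gaps are of order $T$ in case (a) and $o(T)$ on the appropriate side in cases (b)--(c), and finish with concentration of $x_T$ around its conditional mean plus the order-$T$ derivative bound (\ref{1.00t41}). Your telescoped difference recursion with non-cancelling $\Theta(1/t)$ forcing terms is simply the explicit, integrated form of the paper's partial-derivative estimates (\ref{1.00t61})--(\ref{1.00t62}) and (\ref{1.00t71}), so the two arguments coincide in substance.
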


In contrast with Theorem~\ref{t5}, Theorem~\ref{t6} indicates that the property of $\hat{a}$ depends on the position of the change point. If the change point is roughly in the middle of the $B=[S,T]$ interval, then $\hat{a}$ deviates from both $\alpha_1$ and $\alpha_2$. However,  $\hat{a}$ converges to the corresponding $\alpha_i$ in probability if the change point is closer to one endpoint of  the $B=[S,T]$ interval.

Based on Theorems~\ref{t5}--\ref{t6}, we can fit a dynamic network by dividing the whole time into some intervals and estimating $a_t$ each interval. For an interval without change points, Theorem \ref{t5} ensures the convergence rate of the estimator.  For an interval with one change point,   Theorem~\ref{t6} ensures the property of the estimator.

\section{Extensions with change points}
\subsection{Testing the existence of the change point of $a_t$}
Theorem~\ref{t5} shows the asymptotic normality of $\hat{a}$. Based on it, we can construct a statistic to test $a_t$. In other words, we can test whether $a_t$ changes over time. Assume that the whole time $[0,T]$ is partitioned into three intervals $[0,R_1]$, $(R_1,R_2]$ and $(R_2,T]$. Here  $R_1$ and $T-R_2$ have the order of $T$, but $R_2-R_1$ can be very small (even one). There is no change point in the first and latest intervals, but  zero or one change point of $a_t$ in $(R_1,R_2]$. Then \begin{equation}\label{changeset}
a_t=\begin{cases}
\alpha_1 & t \leq R_1, \\
\alpha_2 & R_2< t \leq T.
\end{cases}
\end{equation}
Suppose there is no change point in $(R_1,R_2]$, $\alpha_1=\alpha_2$. Otherwise, $\alpha_1 \neq \alpha_2$. Then the direct idea is to use the joint distributions in  Theorem~\ref{t5} to propose the following statistic. Let
 \begin{equation}\label{defstal}
L=\frac{(\hat{\alpha}_{2}-\hat{\alpha}_{1})^2}{\frac{g_{R_2,T}(\hat{\alpha}_{2})}{(f_{R_2,T}'(\hat{\alpha}_{2}))^2}+
 \frac{g_{0,R_1}(\hat{\alpha}_{1})}{(f_{0,R_1}'(\hat{\alpha}_{1}))^2}}.
\end{equation}
Theorem~\ref{t5} implies that the statistic $L$ weakly converges to the chi-square distribution with $1$ degree of freedom when $a_1=a_2$. Moreover, $\frac{g_{R_2,T}(\hat{\alpha}_{2})}{(f_{R_2,T}'(\hat{\alpha}_{2}))^2}+
 \frac{g_{0,R_1}(\hat{\alpha}_{1})}{(f_{0,R_1}'(\hat{\alpha}_{1}))^2}=O_p(T^{-1})$, which implies that $L$ has the order of $(\alpha_1-\alpha_2)^2T$. Thus, $L$ could be used as a statistic to test $\alpha_1=\alpha_2$ or the existence of the change point in $(R_1,R_2]$.

 \subsection{Detecting the interval of the change point for $a_t$}
When we already know there is one change point for $a_t$, we would like to know where the change point is. To this end, define
\begin{equation}\label{defsta2}
\tilde{L}_{[s,t)}=|\hat{\alpha}_{s,1}-\hat{\alpha}_{t,2}|,
\end{equation}
where $\hat{\alpha}_{s,1}$ and $\hat{\alpha}_{t,2}$ are the estimators of $a_t$ based on the time intervals $[0,s)$ and $[t,T]$, respectively.

We detect the (asymptotic) interval containing the change point by the following steps.

(i) Divide the whole time into $k$ intervals $B_1, \cdots, B_k$ where $B_1=[0,s_1)$, $B_i=[s_{i-1},s_{i})$ for $2 \leq i \leq k-1$ and $B_k=[s_{k-1},T)$. There exists $c>0$ such that $s_1>cT$ and $s_{k-1}<(1-c)T$. Let $\hat{j}=\arg\max_{2 \leq i \leq k-1} \tilde{L}_{B_i}$ and define a new interval $V$ dependent on the value of the estimator $\hat{j}$ as follows
\begin{equation*}
V=\begin{cases}
B_{\hat{j}-1}\cup B_{\hat{j}}\cup B_{\hat{j}+1} & 3 \leq \hat{j} \leq k-2, \\
B_{2}\cup B_{3}\cup B_{4} & \hat{j}=2, \\
B_{k-3}\cup B_{k-2}\cup B_{k-1} & \hat{j}=k-1.
\end{cases}
\end{equation*}

(ii) Divide the interval $V$ further into $k$ subintervals $B_1^{1}, \cdots, B_k^{1}$. Let $\hat{j}=\arg\max_{1 \leq i \leq k} \tilde{L}_{B_i^1}$ and define a new interval $V^{(1)}$ as follows:
\begin{equation*}
V^{(1)}=\begin{cases}
B_{\hat{j}-1}^{1}\cup B_{\hat{j}}^{1}\cup B_{\hat{j}+1}^{1} & 2 \leq \hat{j} \leq k-1, \\
B_{1}^{1}\cup B_{2}^{1}\cup B_{3}^{1} & \hat{j}=1, \\
B_{k-2}^{1}\cup B_{k-1}^{1}\cup B_{k}^{1} & \hat{j}=k.
\end{cases}
\end{equation*}

(iii) Repeat Step (ii) until $V^{(q)}$ is small enough. The change point must be in $V^{(q)}$.

\begin{proposition}\label{propchainter}
Let $R$ be the change point such that
 \begin{equation*}
a_t=\begin{cases}
\alpha_1 & 0 \leq t < R, \\
\alpha_2 & R \leq t \leq T.
\end{cases}
\end{equation*}
Moreover, $\|\alpha_1-\alpha_2\|>0$ and $s_1 \leq R \leq s_{k-1}$. Then for any bounded $k$ and $q$,

\begin{equation}\label{defchainter}
\lim_{T \rightarrow \infty}P(R \in V^{(q)})=1.
\end{equation}

\end{proposition}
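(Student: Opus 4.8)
The plan is to show that the detection statistic $\tilde{L}_{[s,t)}=|\hat{\alpha}_{s,1}-\hat{\alpha}_{t,2}|$ is asymptotically maximized precisely on the probe intervals that straddle the change point $R$, and is bounded strictly below its maximal value $|\alpha_1-\alpha_2|$ on probe intervals lying well to either side of $R$. Recall that $\hat{\alpha}_{s,1}$ is fit on $[0,s)$ and $\hat{\alpha}_{t,2}$ on $[t,T]$, so for $B_i=[s_{i-1},s_i)$ we have $\tilde{L}_{B_i}=|\hat{\alpha}_{s_{i-1},1}-\hat{\alpha}_{s_i,2}|$. Since $s_1\le R\le s_{k-1}$ with $s_1>cT$ and $s_{k-1}<(1-c)T$, the change point obeys $R/T\in[c,1-c]$; consequently $R$, $T-R$, and---for fixed bounded $k$ and $q$---every sub-interval length produced by the scheme are all of exact order $T$. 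This order bookkeeping is what allows the cited theorems to be applied at every probe interval and at every recursion level.

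First I would treat the probe interval containing $R$. Let $B_{i^*}=[s_{i^*-1},s_{i^*})$ with $s_{i^*-1}\le R<s_{i^*}$; note $i^*\notin\{1,k\}$ because $R\in[s_1,s_{k-1}]$. Then $[0,s_{i^*-1})$ lies entirely before $R$ and $[s_{i^*},T]$ entirely after $R$, both of length order $T$ and free of change points, so Theorem~\ref{t5} yields $\hat{\alpha}_{s_{i^*-1},1}\to\alpha_1$ and $\hat{\alpha}_{s_{i^*},2}\to\alpha_2$ in probability, whence $\tilde{L}_{B_{i^*}}\to|\alpha_1-\alpha_2|$. Next I would bound $\tilde{L}$ on far probe intervals. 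For $i\le i^*-2$ the left fit $[0,s_{i-1})$ still precedes $R$ (so $\hat{\alpha}_{s_{i-1},1}\to\alpha_1$), while the right fit $[s_i,T]$ contains $R$ in its interior, with $R-s_i$ and $T-R$ both of order $T$; Theorem~\ref{t6}(a) then forces $\hat{\alpha}_{s_i,2}$ to stay a fixed distance from both $\alpha_1$ and $\alpha_2$. Combined with the fact that the fitted value lies between $\alpha_1$ and $\alpha_2$, this gives $\tilde{L}_{B_i}=|\alpha_1-\hat{\alpha}_{s_i,2}|\le|\alpha_1-\alpha_2|-c'$ for some $c'>0$. The symmetric argument, with $R$ now interior to the left fitting interval $[0,s_{i-1})$, disposes of $i\ge i^*+2$.

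Combining the two estimates, on an event of probability tending to one the maximizer $\hat{j}$ satisfies $\tilde{L}_{B_{\hat{j}}}\ge\tilde{L}_{B_{i^*}}=|\alpha_1-\alpha_2|-o(1)$, so $\hat{j}$ cannot sit at index distance $\ge 2$ from $i^*$; hence $|\hat{j}-i^*|\le 1$. Because $V$ is the union of $B_{\hat{j}}$ with its two neighbours (with the stated modifications when $\hat{j}\in\{2,k-1\}$), this forces $B_{i^*}\subseteq V$ and therefore $R\in V$ with probability tending to one. The recursion repeats this dichotomy verbatim: at level $q$ the fitting intervals are still $[0,s)$ and $[t,T]$ with $s,t\in V^{(q-1)}\subseteq[cT,(1-c)T]$, so they remain of order $T$, and for bounded $q$ the sub-interval lengths inside $V^{(q-1)}$ stay of order $T$; the probe containing $R$ again attains $|\alpha_1-\alpha_2|$ while probes at index distance $\ge 2$ fall below it, and the edge-case definitions of $V^{(q)}$ retain the probe that contains $R$ even when it sits at a boundary of $V^{(q-1)}$. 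Intersecting the bounded number of such events and letting $T\to\infty$ gives $\lim_{T\to\infty}P(R\in V^{(q)})=1$.

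The hard part is the uniform bound on far probe intervals. It rests on two points not stated verbatim earlier: (i) that an estimator fit on an interval with $R$ strictly interior lies between $\alpha_1$ and $\alpha_2$, so that the fixed separation from one endpoint supplied by Theorem~\ref{t6}(a) becomes a fixed deficit of $\tilde{L}$ below $|\alpha_1-\alpha_2|$; this should follow from the monotonicity of $a\mapsto f_{S,T}(a)$ already used for uniqueness in Theorems~\ref{t5}--\ref{t6}, together with the interpolation that $f_{S,T}(\alpha_1)$ and $f_{S,T}(\alpha_2)$ bracket $x_T$; and (ii) a version of Theorem~\ref{t6}(a) for the left fitting interval $[0,s)$, normalized by its own endpoint $s$ rather than by $T$. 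Securing the constant $c'$ uniformly over the $O(1)$ far probes, and checking that the order-$T$ accounting survives each of the bounded recursion levels, is where the substantive work lies.
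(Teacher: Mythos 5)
Your proposal is correct and follows essentially the same route as the paper, whose own ``proof'' of Proposition \ref{propchainter} is just the one-sentence idea that when $R \in B_j^d$, Theorems \ref{t5}--\ref{t6} force the maximum of $\tilde{L}_{B_i^d}$ to occur at $i=j-1$, $j$, or $j+1$ with probability tending to one --- exactly the dichotomy you establish ($\tilde{L}\to|\alpha_1-\alpha_2|$ at the straddling probe via Theorem \ref{t5}, a fixed deficit $c'$ at probes of index distance $\geq 2$ via Theorem \ref{t6}(a), with order-$T$ bookkeeping across the bounded recursion). The ``betweenness'' of the mis-specified estimator that you flag as the substantive gap is in fact already contained in the paper's proof of Theorem \ref{t6}(a), where $f_{S,T}(\alpha_1) < x_T < f_{S,T}(\alpha_2)$ together with the monotonicity of $f_{S,T}$ places $\hat a$ strictly between $\alpha_2$ and $\alpha_1$, so your write-up is, if anything, more detailed than the paper's sketch.
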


Proposition \ref{propchainter} can be proved by the following idea. For any finite $d$, if $R \in B_{j}^{d}$, Theorems \ref{t5}-\ref{t6} imply that the maximum of $\tilde{L}_{B_i^d}$ occurs at $i=j-1$, $i=j$ or $i=j+1$ with probability tending to one as $T$ tends to infinity.

The above method can get an interval $V^{(q)}$ containing the change point. The  interval length is not larger than $3^qk^{-q}T$, and we only need to estimate $a_t$ by $2kq$ times. Thus it strikes a balance between the accuracy and the computing cost.

\subsection{Multiple change points}

%
%

Note that $a_t$ may have more than one change point over the whole time. If this is the case, we divide the whole time into more intervals to test the change points. To this end, we first specify an assumption for change points.
\begin{assumption}\label{A3}
The minimal distance between change points are large enough such that the whole time can be partitioned into finite intervals satisfying the following conditions: (i) there is at most one change point in four consecutive intervals and (ii) there is not any change point in the first and last intervals.
\end{assumption}
 Assumption \ref{A3} seems to be restrictive. However, if the distance between each pair of change points is large enough(e.g. $\tau T$ with $\tau>0$), we can partition the whole time into finite intervals with a length not larger than $\tau T/4$. Then Assumption  \ref{A3} holds.



We are now able to propose a statistic to detect the intervals containing change points no matter whether the number of change points is known or unknown. We start with the case when the number of change points is known. Recall $\hat{\alpha}_{i}$ from Theorem \ref{t5}. Given Theorem \ref{t5}, Theorem \ref{t6} and condition Assumption \ref{A3}, we may use the difference between estimators $\hat{\alpha}_{i}$ to locate the positions of change points. The idea behind it is that the asymptotic joint distribution of $(\hat\alpha_i-\hat\alpha_j)$ is Gaussian, which is bounded in probability, if $\alpha_i=\alpha_j$, while $(\hat\alpha_i-\hat\alpha_j)$ diverges if $\alpha_i\neq\alpha_j$ according to the orders of normalizing constants used in Theorem \ref{t5} (see Lemma \ref{ll1} in the supplementary material). Below  we use the difference between $\hat{\alpha}_{i+1}$ and $\hat{\alpha}_{i-1}$ instead of $\hat{\alpha}_{i+1}$ and $\hat{\alpha}_{i}$. It is to avoid being challenging to handle when the change points are near the edge points of the intervals. More precisely,
 define the statistic
 $$L_i=\frac{(\hat{\alpha}_{i+1}-\hat{\alpha}_{i-1})^2}{\frac{g_{s_{i+1},t_{i+1}}(\hat{\alpha}_{i+1})}{(f_{s_{i+1},t_{i+1}}'(\hat{\alpha}_{i+1}))^2}+
 \frac{g_{s_{i-1},t_{i-1}}(\hat{\alpha}_{i-1})}{(f_{s_{i-1},t_{i-1}}'(\hat{\alpha}_{i-1}))^2}}$$ for any $2 \leq i \leq k-1$. 
\begin{theorem}\label{t11}
Suppose that the conditions in Theorem \ref{t5} hold. 
 When $T \rightarrow \infty$, the following results hold with probability tending to one.
\begin{itemize}
\item[(1)] Assume that $a_t=a$ for any $t \in$ $B_{j-1}$, $B_j$ and $B_{j+1}$. Then $L_j$ weakly converges to chi-square distribution with $1$ degree of freedom.
\item[(2)] Suppose that Assumption \ref{A3} holds and there is some $j>1$ and $R \in B_j$ such that
\begin{equation*}
a_t=\begin{cases}
\alpha_1 & t < R, \\
\alpha_2 & t \geq R,
\end{cases}
\end{equation*}
where $|\alpha_1-\alpha_2|>0$. Then there exists $c>0$ such that $L_j>cT$ with probability tending to 1.  Moreover, there is a local maximum in $\{L_{j-k}:k=-1,0,1\}$ with probability tending to 1.
\end{itemize}
\end{theorem}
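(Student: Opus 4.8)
The plan is to handle the two items separately, in each case rewriting $L_j$ through the asymptotically normal pivots furnished by Theorem~\ref{t5}. Throughout I set $\tau_i = g^{1/2}_{s_i,t_i}(\hat\alpha_i)/f'_{s_i,t_i}(\hat\alpha_i)$, so that the denominator of $L_i$ is exactly $\tau_{i+1}^2+\tau_{i-1}^2$ and, by Theorem~\ref{t5}, $\hat\alpha_i-\alpha_i=\tau_i Z_i$ where the collection $(\dots,Z_{i-1},Z_{i+1},\dots)$ converges jointly to $N(0,I_k)$. Since the orders of $f'_{s_i,t_i}$ and $g_{s_i,t_i}$ are both $T$ with probability tending to one (with deterministic leading constants, as specified in the supplementary material), each $\tau_i^2$ has exact order $T^{-1}$, so the normalized coefficient vector with entries $u_{j\pm1}=\tau_{j\pm1}/(\tau_{j+1}^2+\tau_{j-1}^2)^{1/2}$ is a unit vector whose entries converge in probability to deterministic constants lying strictly between $0$ and $1$.

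For item (1), $\alpha_{j-1}=\alpha_{j+1}=a$, hence $\hat\alpha_{j+1}-\hat\alpha_{j-1}=\tau_{j+1}Z_{j+1}-\tau_{j-1}Z_{j-1}$ and therefore
\[
L_j=\Big(u_{j+1}Z_{j+1}-u_{j-1}Z_{j-1}\Big)^2 .
\]
The bracketed quantity is a linear combination of the asymptotically independent standard normals $Z_{j\pm1}$ (the asymptotic independence being precisely the diagonal structure of the limit $N(0,I_k)$ in Theorem~\ref{t5}) with coefficients $(u_{j+1},-u_{j-1})$ that converge to a deterministic unit vector; by Slutsky's theorem it converges to $N(0,1)$, so $L_j$ converges weakly to $\chi^2_1$.

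For item (2), Assumption~\ref{A3} forces the intervals $B_{j-1}$ and $B_{j+1}$ flanking the interval $B_j$ that contains $R$ to be free of change points, and since $B_{j-1}$ lies entirely before $R$ while $B_{j+1}$ lies entirely after it, Theorem~\ref{t5} yields the consistency $\hat\alpha_{j-1}\to\alpha_1$ and $\hat\alpha_{j+1}\to\alpha_2$ in probability. Thus $(\hat\alpha_{j+1}-\hat\alpha_{j-1})^2\to(\alpha_2-\alpha_1)^2>0$, whereas the denominator $\tau_{j+1}^2+\tau_{j-1}^2=O_p(T^{-1})$; dividing gives $L_j>cT$ for some $c>0$ with probability tending to one. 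For the local-maximum claim I first note that Assumption~\ref{A3} also makes $B_{j-2},B_{j-3}$ clean and before $R$ (so $\hat\alpha_{j-1},\hat\alpha_{j-3}$ both estimate $\alpha_1$) and $B_{j+2},B_{j+3}$ clean and after $R$; the argument of item (1) then shows $L_{j-2}$ and $L_{j+2}$ each converge weakly to $\chi^2_1$ and are in particular bounded in probability.

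Let $i^\ast\in\{j-1,j,j+1\}$ attain $\max\{L_{j-1},L_j,L_{j+1}\}$, so that $L_{i^\ast}\ge L_j>cT$. Being the maximizer over the window, $L_{i^\ast}$ dominates its in-window neighbour(s); its only out-of-window neighbour is one of $L_{j-2},L_{j+2}$, which are $O_p(1)$ and hence eventually below $cT$. Consequently $L_{i^\ast}\ge L_{i^\ast-1}$ and $L_{i^\ast}\ge L_{i^\ast+1}$ with probability tending to one, i.e. $i^\ast$ is a local maximum of the sequence $(L_i)$ lying in $\{j-1,j,j+1\}$. I expect the genuine obstacle to be exactly this last step: the three window values $L_{j-1},L_j,L_{j+1}$ depend, through the behaviour of $\hat\alpha_j$ described in Theorem~\ref{t6}, on the unknown position of $R$ inside $B_j$, so one must argue that whatever that position, the maximizer of the window both dominates its within-window neighbours and exceeds the bounded flanking values $L_{j\pm2}$; Assumption~\ref{A3} is what guarantees the clean flanking intervals that produce this separation of scales and thereby closes the argument.
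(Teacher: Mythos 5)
Your proposal follows essentially the same route as the paper's proof: item (1) via the joint asymptotic normality of the self-normalized pivots from Theorem~\ref{t5}, item (2) via consistency of $\hat\alpha_{j-1},\hat\alpha_{j+1}$ toward $\alpha_1,\alpha_2$ together with the $O_p(T^{-1})$ order of the denominator, and the local-maximum claim via Assumption~\ref{A3} clearing all of $B_{j-3},\dots,B_{j+3}$ of further change points so that $L_{j\pm2}$ converge to $\chi^2_1$ and are $O_p(1)$ while $L_j>cT$. The obstacle you flag at the end is not actually open: your own argmax argument (the window maximizer $i^\ast$ dominates its in-window neighbours by definition and its out-of-window neighbour because $L_{j\pm2}=O_p(1)<cT$ eventually, uniformly over the position of $R$ in $B_j$) is precisely the step the paper leaves implicit when it says $L_j>cT$ together with $L_{j\pm2}=O_p(1)$ ``completes the proof.''
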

If the number of change points is $h$, then Theorem \ref{t11} can detect the locations of change points by looking for the first $h$ largest local maximums. Since we do not know the locations of change points when dividing the time into subintervals, sometimes change points may be near to the boundary of two intervals (the cases (b) and (c) of Theorem \ref{t6}). In this case, the local maximum may occur at $(j-1)$ or $(j+1)$. Fortunately, in this case, we can use the method in Section 3.2 to gain the intervals of change points with small lengths. We will give the details in Section 3.4.



We can also detect change points by Theorem \ref{t11} when the number of change points is unknown. To this end, define a critical value $c_T$ that satisfies
$$\lim\limits_{T \rightarrow \infty}c_T=\infty, \quad \lim\limits_{T \rightarrow \infty}\frac{c_T}{T}=0.$$ Denote the number of change points by $s_p$. Its estimator is constructed as
 \begin{equation}\label{31}
 \hat{s}_p=\{\text{the number of the local maximums of $L_i$ bigger than $c_T$}\}.
 \end{equation}
\begin{theorem}\label{t12}
Suppose that the conditions in Theorem \ref{t11} hold. Then when $T \rightarrow \infty$, the following result holds with probability tending one
\begin{itemize}
\item $\hat{s}_p=s_p$.
\end{itemize}
\end{theorem}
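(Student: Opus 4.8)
The plan is to prove the two one-sided bounds $\hat s_p \ge s_p$ and $\hat s_p \le s_p$ separately, each with probability tending to one, and then combine them. Since the number of intervals $k$ (and hence of change points $s_p$) is bounded, all the high-probability statements below can be made to hold simultaneously by a union bound with no loss. The whole argument turns on the two defining properties of $c_T$: because $c_T/T\to 0$, any statistic of order $T$ eventually exceeds $c_T$, while because $c_T\to\infty$, any statistic that is $O_p(1)$ eventually falls below $c_T$. These are precisely the two regimes isolated by Theorem~\ref{t11}.

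For the lower bound $\hat s_p \ge s_p$, I would argue that every genuine change point forces a distinct local maximum above $c_T$. Let $R$ be a change point lying in an interval $B_j$. By Theorem~\ref{t11}(2) there is $c>0$ with $L_j>cT$ and a local maximum of the sequence in $\{L_{j-1},L_j,L_{j+1}\}$; since $cT>c_T$ for all large $T$, that local maximum exceeds $c_T$. It then remains to check that distinct change points produce distinct local maxima, and here Assumption~\ref{A3} does the work: ``at most one change point in four consecutive intervals'' forces any two change-point indices to differ by at least four, so the index windows $\{j-1,j,j+1\}$ attached to different change points are disjoint and are moreover separated by at least one ``clean'' index. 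Counting one local maximum per change point gives $\hat s_p\ge s_p$.

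For the upper bound $\hat s_p \le s_p$ I must rule out spurious detections, and this splits into two parts. (i) On any window $B_{i-1},B_i,B_{i+1}$ on which $a_t$ is constant, Theorem~\ref{t11}(1) gives $L_i\Rightarrow\chi^2_1$, hence $L_i=O_p(1)$ and $P(L_i>c_T)\to 0$; thus no local maximum exceeding $c_T$ can occur outside the change-point windows. (ii) Within the window $\{j-1,j,j+1\}$ attached to a single change point $R\in B_j$, I must show at most one local maximum clears $c_T$, i.e. that the large values form a single peak. The idea is that $L_j$ dominates its two neighbours: its numerator $(\hat\alpha_{j+1}-\hat\alpha_{j-1})^2$ converges to the full gap $(\alpha_1-\alpha_2)^2$ because $\hat\alpha_{j-1}\to\alpha_1$ and $\hat\alpha_{j+1}\to\alpha_2$ are clean estimates on the two sides of $R$, whereas the numerators of $L_{j-1}$ and $L_{j+1}$ involve the contaminated estimate $\hat\alpha_j$, which by Theorem~\ref{t6} lies between $\alpha_1$ and $\alpha_2$ and so yields only a partial gap. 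Since the normalising denominators are all of order $T^{-1}$ (including the one evaluated at $\hat\alpha_j$, using that $f'$ and $g$ are of order $T$), we obtain $L_j\ge L_{j\pm1}$, so $L_j$ is the unique local maximum of the cluster. Combining (i) and (ii) yields $\hat s_p\le s_p$.

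The main obstacle is step (ii): establishing that $L_j$ genuinely dominates $L_{j-1}$ and $L_{j+1}$, rather than merely that some local maximum exists. This needs two quantitative inputs beyond the bare statement of Theorem~\ref{t11} — first, that the contaminated estimate $\hat\alpha_j$ sits strictly between $\alpha_1$ and $\alpha_2$ (a sharpening of Theorem~\ref{t6}, so that the partial gaps $|\hat\alpha_j-\alpha_1|$ and $|\alpha_2-\hat\alpha_j|$ are each strictly smaller than the full gap), and second, uniform control showing that the denominator $g_{s_j,t_j}(\hat\alpha_j)/(f_{s_j,t_j}'(\hat\alpha_j))^2$ remains of the same order $T^{-1}$ as the clean denominators even when evaluated at the random, off-target argument $\hat\alpha_j$; only then does the smaller numerator translate into a strictly smaller value. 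The remaining bookkeeping — converting ``order $T$ versus $O_p(1)$'' into ``above versus below $c_T$'', and assembling the finitely many events — is routine given Theorems~\ref{t5}--\ref{t11} together with the separation furnished by Assumption~\ref{A3}.
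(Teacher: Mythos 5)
Your two-sided strategy is precisely the paper's proof, which consists of exactly the two observations you lead with: the lower bound $\hat{s}_p\geq s_p$ follows because Theorem \ref{t11}(2) makes each change point produce a local maximum exceeding $cT$, which dominates $c_T$ since $c_T/T\to 0$; the upper bound $\hat{s}_p\leq s_p$ follows because Theorem \ref{t11}(1) gives $L_i=O_p(1)=o_p(c_T)$ on windows where $a_t$ is constant. The paper's proof is two short paragraphs saying just this, with the finitely-many-events union bound left implicit, as in your write-up.

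Where you go beyond the paper is your step (ii), and you have put your finger on a real subtlety that the paper's proof silently skips: Theorem \ref{t11}(1) only covers triples $B_{i-1},B_i,B_{i+1}$ on which $a_t$ is constant, and Theorem \ref{t11}(2) asserts \emph{existence}, not uniqueness, of a local maximum in $\{L_{j-1},L_j,L_{j+1}\}$. When $R$ lies in the interior of $B_j$, Theorem \ref{t6}(a) makes $\hat{\alpha}_j$ bounded away from both $\alpha_1$ and $\alpha_2$, so $L_{j-1}$ and $L_{j+1}$ are both of order $T$, and the configuration $L_{j-1}>L_j<L_{j+1}$ (two local maxima above $c_T$ in one window, hence overcounting) is not excluded by anything stated in the paper. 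However, your proposed fix does not actually close this: from ``all denominators are of order $T^{-1}$'' you infer $L_j\geq L_{j\pm 1}$, which is a non sequitur, since the denominators of $L_{j-1}$, $L_j$, $L_{j+1}$ share no common summands and their $O(T^{-1})$ constants can differ across intervals, so the larger numerator of $L_j$ (the full gap versus the partial gaps --- and that part does hold, because the monotonicity argument in the proof of Theorem \ref{t6} already yields $\alpha_2<\hat{a}<\alpha_1$ with probability tending to one, the ``sharpening'' you ask for) does not by itself force pointwise domination. So your proposal reproduces the paper's argument for both bounds and honestly flags a gap that the paper's own proof also has; closing it would require either comparing the normalizing constants of adjacent $L_i$'s quantitatively or modifying the counting rule to count at most one exceedance per separated window, neither of which is supplied by the stated theorems.
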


However, different $c_T$ may lead to different results when $T$ is not large enough in practice. Thus we propose two different methods to estimate the number of change points when $T$ is not large enough.

The first method is based on the asymptotic distribution of $L_i$. Let $\chi(\beta)$ be the $\beta$ quantile of the chi-square distribution with $1$ degree of freedom. Then we replace $c_T$ by $\chi(0.01/k)$ where $k$ is the number of the intervals (obtained from partitioning the whole time $[0,T]$). If there is no change point in the $(i-1)$th, $i$th and $(i+1)$th intervals, the probability of the event $L_i>\chi(0.01/k)$ tends to $0.01/k$ as $T$ tends to infinity.

The second method does not use (\ref{31}) with a threshold. We sort the local maximums of $L_i$ as $L_{(1)} \geq \cdots \geq L_{(\tilde{k})}$ and define $L_{(i)}/L_{(i+1)}$ as $R_{i}$. Then we let $\hat{s}_p=\arg\max_{1 \leq i \leq \tilde{k}-1}R_{i}$. This method is based on the different orders of $L_i$ from  Theorem \ref{t11}.

\subsection{Two-steps method}
 The above section uses $L_i$ to detect change points. Recall that $k$ means the number of total intervals by partitioning the whole time. When $k$ is small and  $T$ is large enough, Theorems \ref{t11}-\ref{t12} ensure that the method works. However, note that the maximum of $L_i$ in the intervals, which do not have change points, depends on $k$. If $k$ is not small enough and $T$ is not large enough, the methods may not work well. To avoid those problems, we propose a two-step method further:
\begin{itemize}
 \item[(1)] We divide the whole time into $k$ intervals to ensure Assumption \ref{A3}. Note that $k$ should not be too large. Then we can use the methods in Section 3.3 to detect the number of change points such that $\hat{s}_p$ and the first $\hat{s}_p$ largest local maximums of $L_i$ can be obtained.

\item[(2)] If $L_j$ is one of the first $\hat{s}_p$ largest local maximums of $L_i$  and there exist $j_1<j$ and $j_2>j$ such that $L_{k_1}$ and $L_{k_2}$ do not belong to the  $\hat{s}_p$ largest local maximums of $L_i$ for any $j_1\leq k_1<j<k_2 \leq j_2$. Then
 let $\bigcup_{i=j_1+1}^{j_2-1} B_{i}$ be the new whole time interval, and one can use the method in Section 3.2 to obtain the interval containing the change point.
\end{itemize}

\subsection{The locations of change points}
We have proposed the methods to detect intervals containing change points for $a_t$. Sometimes we may want to obtain precise locations of change points rather than intervals. The natural idea is to estimate $a_t$ in rolling windows and find change points in estimators. However, the computation cost is very high since we need to estimate $a_t$ by $T$ times when $T$ is large. Thus, we propose another method to detect the location of the change points:
\begin{itemize}
 \item[(1)] Assume that there is a change point in $(t_0,t_1)$. We estimate $\hat{a}$ by (\ref{1.01t42gAE}). For $t_0 \leq t \leq t_1$, let $\hat{m}_t=\frac{1+\hat{a}}{2t-1+\hat{a}v_{t-1}}$. Let $\hat{x}_{t_0}=x_{t_0}$ and  $\hat{x}_t=(1-\hat{m}_t)[1-(1-y_t)\hat{m}_t]\hat{x}_{t-1}+y_t$ for $t_0+1 \leq t \leq t_1$.

\item[(2)] Let
\begin{equation}\label{cplocdect}
\hat{\tau}=\arg\max_{t \in (t_0,t_1)}|\hat{x}_t-x_t|+1
\end{equation} be the estimator for the location of the change point.
\end{itemize}
\begin{theorem}\label{cplocdect1thm}
Under the conditions of case (a) in Theorem \ref{t6} let
 \begin{equation}\label{cplocdect100a}
h(t)=|E(x_t\mid\Gamma_{S})-f_{S,t}(\hat{a})|
\end{equation}
for $t \in [S,T]$ and
\begin{equation}\label{cplocdect100}
\hat{R}=\arg\max_{t \in (S,T)}h(t).
\end{equation}
Assume that
\begin{equation}\label{cplocdect100b}
h(\hat{R})-h(t)\geq c|t-\hat{R}|
\end{equation}
for some $c>0$ and any $S \leq t \leq T$.
 Let $t_0=S$ and $t_1=T$ in (\ref{cplocdect}). Then
\begin{equation}\label{cplocdect1}
\frac{|\hat{\tau}-1-\hat{R}|}{T}=O_p(\frac{\log T}{T^{1/2}}).
\end{equation}
\end{theorem}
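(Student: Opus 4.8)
The plan is to recognize that the empirical curve maximized in \eqref{cplocdect} is a noisy version of the population curve $h$, and then to run a standard argmax (M-estimation) comparison in which the noise is controlled by a uniform martingale deviation bound.

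First I would identify the auxiliary sequence $\hat x_t$. Comparing the recursion $\hat x_t=(1-\hat m_t)[1-(1-y_t)\hat m_t]\hat x_{t-1}+y_t$ with the middle row of the matrix $A_t$ in Definition~\ref{de1} and Lemma~\ref{l1}, one sees that $\hat x_t$ is exactly the conditional-mean recursion run with the constant parameter $a=\hat a$ and initialized at $\hat x_{S}=x_S$. Hence $\hat x_t=f_{S,t}(\hat a)$, the conditional mean $E(x_t\mid\Gamma_S)$ computed with parameter $\hat a$, for all $t\in[S,T]$, so that $|\hat x_t-x_t|=|f_{S,t}(\hat a)-x_t|$ and $\hat\tau-1=\arg\max_{t\in(S,T)}|f_{S,t}(\hat a)-x_t|$. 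Writing $W_t:=x_t-E(x_t\mid\Gamma_S)$ for the fluctuation of $x_t$ about its true conditional mean and $D(t):=|f_{S,t}(\hat a)-x_t|$, the identity $f_{S,t}(\hat a)-x_t=\big(f_{S,t}(\hat a)-E(x_t\mid\Gamma_S)\big)-W_t$ together with the triangle inequality gives
$$\big|D(t)-h(t)\big|\le |W_t|,\qquad S\le t\le T,$$
since the first bracket has absolute value $h(t)$.

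Next I would establish the uniform bound $\epsilon_T:=\sup_{S\le t\le T}|W_t|=O_p(T^{1/2}\log T)$, which is the technical core. Subtracting the conditional-mean recursion $E(x_t\mid\Gamma_S)=c_tE(x_{t-1}\mid\Gamma_S)+y_t$, with $c_t=(1-m_t)[1-(1-y_t)m_t]$ the true-parameter coefficient, from the one-step Doob decomposition $x_t=c_tx_{t-1}+y_t+d_t$, where $d_t:=x_t-E(x_t\mid\Gamma_{t-1})$ is a martingale difference, yields the affine recursion $W_t=c_tW_{t-1}+d_t$ with $W_S=0$. Unfolding gives $W_t=\sum_{s=S+1}^{t}\big(\prod_{r=s+1}^{t}c_r\big)d_s$, a weighted sum of bounded martingale differences whose weights lie in $[0,1]$ because each $c_r\in[0,1]$. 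Since a leaf count changes by at most two per step, $|d_s|\le C_0$ for a constant $C_0$, so for each fixed $t$ the Azuma--Hoeffding inequality gives $P(|W_t|>\lambda)\le 2\exp\!\big(-\lambda^2/(2C_0^2 T)\big)$, using $\sum_{s}\big(\prod_{r>s}c_r\big)^2\le t-S\le T$. A union bound over the at most $T$ values of $t$, with $\lambda$ a large multiple of $(T\log T)^{1/2}$, then yields $\epsilon_T=O_p(T^{1/2}\log T)$; the order of $\mathrm{Var}(x_t\mid\Gamma_S)$ supplied by Lemma~\ref{t1} corroborates this scaling.

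Finally I would run the argmax comparison. Since $\hat\tau-1$ maximizes $D$ and $\hat R$ maximizes $h$, the chain
$$h(\hat\tau-1)+\epsilon_T\ge D(\hat\tau-1)\ge D(\hat R)\ge h(\hat R)-\epsilon_T$$
gives $h(\hat R)-h(\hat\tau-1)\le 2\epsilon_T$. Invoking the sharp-peak hypothesis \eqref{cplocdect100b}, $h(\hat R)-h(\hat\tau-1)\ge c\,|\hat\tau-1-\hat R|$, so $|\hat\tau-1-\hat R|\le (2/c)\,\epsilon_T=O_p(T^{1/2}\log T)$, and dividing by $T$ produces \eqref{cplocdect1}. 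Here case~(a) of Theorem~\ref{t6} is used only to guarantee that $\hat a$ exists, is unique, and satisfies $f_{S,T}(\hat a)=x_T$ with probability tending to one, so that $\hat x_t$ and the criterion $D$ are well defined. The main obstacle is the uniform-in-$t$ deviation bound of the third paragraph: one must exploit both the contraction $c_r\le 1$ and the bounded increments to invoke a maximal inequality, and must handle the fact that the coefficients $c_r$ are themselves functions of the observed $v$-path rather than fixed constants. Conditioning on that path, which is legitimate under Assumption~\ref{assumptionA1}, is precisely what renders the coefficients deterministic and makes the martingale structure and the Azuma bound available.
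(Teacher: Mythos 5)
Your proposal is correct and follows essentially the same route as the paper's own proof: the identity $\hat x_t=f_{S,t}(\hat a)$, the triangle-inequality sandwich $\bigl|\,|x_t-\hat x_t|-h(t)\bigr|\le |x_t-E(x_t\mid\Gamma_S)|$, the uniform bound $\max_{t\in[S,T]}|x_t-E(x_t\mid\Gamma_S)|=O_p(T^{1/2}\log T)$ from the martingale structure, and the argmax comparison via hypothesis (\ref{cplocdect100b}). The only difference is presentational: where the paper invokes the martingale differences of Lemma \ref{ll1}'s companion Lemma \ref{l2} and the bound (\ref{1.1p6}), you make the concentration step explicit via Azuma--Hoeffding with a union bound (in fact yielding the slightly sharper $O_p((T\log T)^{1/2})$), which is a sound elaboration of the same argument.
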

\begin{proposition}\label{proprr}
Under the conditions of case (a) in Theorem \ref{t6} and $y_t=1$ for any $1 \leq t \leq T$, then
\begin{equation}\label{cplocdect1aa}
\lim_{T \rightarrow \infty}P(\hat{R}=R-1)
\end{equation}
and
(\ref{cplocdect100b}) holds.
\end{proposition}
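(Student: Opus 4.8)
The plan is to reduce everything to the deterministic scalar recursion that both $E(x_t\mid\Gamma_S)$ and $f_{S,t}(\hat a)$ obey. Write $\mu_t=E(x_t\mid\Gamma_S)$ and $\hat\mu_t=f_{S,t}(\hat a)$. Since $y_t=1$ for all $t$, the second row of the matrix $A_t$ in Definition~\ref{de1} together with Lemma~\ref{l1} collapses to $E(x_{t}\mid\Gamma_{t-1})=(1-m_{t})x_{t-1}+1$, so $\mu_t=(1-m_t)\mu_{t-1}+1$ and $\hat\mu_t=(1-\hat m_t)\hat\mu_{t-1}+1$, where $m_t=\frac{1+a_t}{2t-1+a_tv_{t-1}}$ uses the true $a_t$ (equal to $\alpha_1$ for $t<R$ and $\alpha_2$ for $t\geq R$) and $\hat m_t=\frac{1+\hat a}{2t-1+\hat av_{t-1}}$. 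Both start from the common value $x_S$, so $d_t:=\mu_t-\hat\mu_t$ satisfies $d_S=0$ and
\[
d_t=(1-m_t)\,d_{t-1}+(\hat m_t-m_t)\,\hat\mu_{t-1}.
\]
Because $y_t=1$ forces $v_{t-1}=t-1+v_0$ to be deterministic, $h(t)=|d_t|$ is a deterministic function of the single random input $\hat a$ (and of $x_S$ through the initial condition). Throughout I read \eqref{cplocdect1aa} as the assertion $\lim_{T\to\infty}P(\hat R=R-1)=1$.

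Next I would record the asymptotics. Writing $\beta(a)=\frac{1+a}{2+a}$, which is strictly increasing on $(-1,\infty)$, one has $m_t=\beta(a_t)/t\,(1+o(1))$, and the recursion forces $\hat\mu_t/t\to 1/(1+\beta(\hat a))$. The crucial structural input is that $\beta(\hat a)$ lies strictly between $\beta(\alpha_1)$ and $\beta(\alpha_2)$. By Theorem~\ref{t6}(a), $\hat a$ is bounded away from each of $\alpha_1,\alpha_2$; and since $\hat a$ is defined by $f_{S,T}(\hat a)=x_T$ while $x_T=E(x_T\mid\Gamma_S)+O_p(\sqrt T)$ by the conditional CLT of Theorem~\ref{t2}, we have $d_T=o_p(T)$. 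Were $\beta(\hat a)$ outside $[\beta(\alpha_1),\beta(\alpha_2)]$, the driving term $\hat m_t-m_t$ would keep one sign throughout $[S,T]$, forcing $|d_T|=\Theta(T)$, a contradiction. Hence $\beta(\hat a)$ is strictly between the two endpoints and, combined with the bound of Theorem~\ref{t6}(a), is bounded away from both.

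I would then analyse $d_t$ regime by regime. Substituting the asymptotics into the recursion shows that on $[S,R-1]$ the increment $d_t-d_{t-1}$ converges to the constant $\gamma_1=\frac{\beta(\hat a)-\beta(\alpha_1)}{(1+\beta(\hat a))(1+\beta(\alpha_1))}$, while on $[R,T]$ it is bounded above by $\gamma_2=\frac{\beta(\hat a)-\beta(\alpha_2)}{(1+\beta(\hat a))(1+\beta(\alpha_2))}$ (the transient term in the solution only makes the slope more negative, so the bound persists up to $t=T$ even though $d_t\downarrow 0$ there). Since $\beta(\hat a)$ is strictly between $\beta(\alpha_1)$ and $\beta(\alpha_2)$, the constants $\gamma_1$ and $\gamma_2$ have opposite signs and are both bounded away from $0$. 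Thus, for $T$ large, $d_t$ increases strictly on $[S,R-1]$ and decreases strictly on $[R,T]$ while keeping one sign; its increment changes sign precisely at $t=R$, where $a_t$ switches, so $|d_t|$ is unimodal with discrete maximum attained exactly at $t=R-1$, giving $P(\hat R=R-1)\to 1$. The same one-sided increment bounds yield, for $t\leq R-1$, $h(R-1)-h(t)=\sum_{s=t+1}^{R-1}(d_s-d_{s-1})\geq c\,(R-1-t)$ and, for $t\geq R$, $h(R-1)-h(t)=-\sum_{s=R}^{t}(d_s-d_{s-1})\geq c\,(t-R+1)$ with $c=\frac{1}{2}\min(|\gamma_1|,|\gamma_2|)>0$, which is exactly \eqref{cplocdect100b}.

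The main obstacle is to make the increment estimates \emph{uniform} in $t$ and to transfer them from their deterministic limits to the realised path: there are order-$T$ increments, and all of them must carry the correct sign simultaneously, so one needs uniform (in $t/T$) control of the error terms in $m_t=\beta(a_t)/t+O(t^{-2})$, $\hat\mu_{t-1}=t/(1+\beta(\hat a))+O(1)$ and $d_{t-1}$, together with the convergences $x_S/S\to\mathrm{const}$ and $\hat a\to a^\ast$ supplied by Theorem~\ref{t2} and Theorem~\ref{t6}(a). Particular care is required near $t=T$, where $d_t\downarrow 0$ and the decreasing increment must still be bounded away from $0$, and near $t=S$, where the $\beta/t$ approximation is crudest; there, however, $d_t$ is so small that that region cannot compete for the maximum, and the linear bound \eqref{cplocdect100b} follows from $d_t\lesssim\gamma_1 t$. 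Obtaining the \emph{exact} location $R-1$ rather than $R-1+O_p(1)$ is precisely what the $\Theta(1)$, bounded-away-from-zero increments $d_{R-1}-d_{R-2}\to\gamma_1$ and $d_R-d_{R-1}\to\gamma_2-\beta(\alpha_2)\gamma_1<0$ deliver, once their signs are shown to be correct with probability tending to one.
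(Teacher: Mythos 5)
Your proposal follows the same skeleton as the paper's proof: both reduce to the scalar recursion $\mu_t=(1-m_t)\mu_{t-1}+1$ valid when $y_t\equiv 1$, show that the difference $d_t=E(x_t\mid\Gamma_S)-f_{S,t}(\hat a)$ has per-step increments of one sign, bounded away from zero, on $[S+1,R)$ and of the opposite sign from $R$ on, and conclude that $|d_t|$ peaks exactly at $t=R-1$ with a linear profile around the peak, which is (\ref{cplocdect1aa}) (correctly read as the limit being $1$) and (\ref{cplocdect100b}). The execution differs in two ways. First, where you establish that $\hat a$ lies strictly between $\alpha_1$ and $\alpha_2$ by contradiction (a one-signed driving term $\hat m_t-m_t$ would force $|d_T|$ of order $T$, contradicting $d_T=O_p(T^{1/2})$ from Theorem \ref{t2} and the estimating equation $f_{S,T}(\hat a)=x_T$), the paper reads it off the monotonicity of the two-parameter function $f_{S,R,T}(\alpha_1,\alpha_2)$ from the proof of Theorem \ref{t6}(a); your argument is a legitimate alternative. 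Second, where you work with limiting slopes $\gamma_1,\gamma_2$ derived from $m_t\approx\beta(a_t)/t$ and $\hat\mu_t\approx t/(1+\beta(\hat a))$ and must then make these approximations uniform in $t$ (the obstacle you rightly flag as the main labor), the paper avoids asymptotic expansion altogether by an exact induction on the ratio $f_{S,t}(\hat a)/E(x_t\mid\Gamma_S)$: the invariant (\ref{rrch0cc}) is precisely the discrete counterpart of your ``the transient only steepens the slope'' remark, and it delivers the uniform one-signed increments (\ref{rrch0cc1}) and (\ref{rrch0cc2}) without any $t\to\infty$ limits. Your route is workable but would require carrying out essentially that induction anyway to control $d_{t-1}$ relative to $t$ at every step.

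One step fails as literally written and needs the paper's extra device. Since $f_{S,T}(\hat a)=x_T$ exactly while $E(x_T\mid\Gamma_S)=x_T+O_p(T^{1/2})$, the terminal value $d_T$ is $O_p(T^{1/2})$ of either sign, not $0$: with your slope bounded away from zero on the second regime, $d_t$ may cross zero at some time $T_3<T$, after which $h(t)=|d_t|$ \emph{increases} again at linear rate. Hence $h$ is not unimodal, and your identity $h(R-1)-h(t)=-\sum_{s=R}^{t}(d_s-d_{s-1})$ is wrong for $t>T_3$, exactly where you claimed the bound ``persists up to $t=T$''. The paper's fix is to set $T_3=\min\{S<t\leq T: E(x_t\mid\Gamma_S)=f_{S,t}(\hat a)\}$ and note that the uniform increment bound forces $T-T_3=O_p(T^{1/2})$, so $h(t)=O_p(T^{1/2})$ on $[T_3,T]$; this cannot compete with $h(R-1)\geq c_2T$, and the linear bound (\ref{cplocdect100b}) still holds on that stretch because $|t-\hat R|$ is of order $T$ there. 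All ingredients for this patch are already present in your proposal ($d_T=O_p(T^{1/2})$ and the slope lower bound), but the case $t\in[T_3,T]$ must be argued separately rather than by summing signed increments.
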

 Simulations in the next section show that (\ref{cplocdect100b}) holds in general cases. One can also combine the two-steps method in the last subsection with the approach in this subsection to detect the location of the change point and hence have a three-step method: At first, we detect the number of change points. Secondly, we obtain the respective intervals containing each change point. Finally, we detect the location of each change point.
\section{Simulation}
This section is to run some simulations to demonstrate the performance of our approach. We design the model as follows:
\begin{itemize}
\item[1.] The whole time is $[0,T]$.
\item[2.] For any $t \in [0,T]$, $y_t=1$ if $\frac{t+1}{2}$ is integer where we define $y_t=v_t-v_{t-1}$.
\item[3.] For any $t \in [\frac{T}{6},\frac{T}{5}]$, $y_{5t}=1$.
\item[4.] Other $y_t=0$.
\item[5.] For any $t \in [1,\frac{23T}{75}]$, $a_t=\alpha_1$.
\item[6.] For any $t \in (\frac{23T}{75},\frac{2T}{3}]$, $a_t=\alpha_2$.
\item[7.] For any $t \in (\frac{2T}{3},T]$, $a_t=\alpha_3$.
\end{itemize}

\subsection{The binary search algorithm}
From the results given in Section 2, we find it important to get the solution of $f_{t_0,t_1}(a)=x_{t_1}$. Since $f_{t_0,t_1}(a)$ is complicated, it is hard to get the solution directly. Fortunately, for any $a$, $f_{t_0,t_1}(a)$ can be calculated by (\ref{1.1}). Moreover, it turns out that $f_{t_0,t_1}(a)$ is monotonously decreasing such that we can use the binary search algorithm to get the solution quickly.
\begin{itemize}
\item Step 1: Initialize three constant $\hat{a}_s$, $\hat{a}_e$ and $e_r$ such that $f_{t_0,t_1}(\hat{a}_e)+e_r < x_{t_1} < f_{t_0,t_1}(\hat{a}_s)-e_r $. In the simulations we set $e_r=0.01$.

\item Step 2: If $f_{t_0,t_1}(\frac{\hat{a}_e+\hat{a}_s}{2})-e_r > x_{t_1}$, let $\hat{a}_s=\frac{\hat{a}_e+\hat{a}_s}{2}$;
If $f_{t_0,t_1}(\frac{\hat{a}_e+\hat{a}_s}{2})+e_r < x_{t_1}$, let $\hat{a}_e=\frac{\hat{a}_e+\hat{a}_s}{2}$.
\item Repeat Step 2 until   $|f_{t_0,t_1}(\frac{\hat{a}_e+\hat{a}_s}{2})-x_{t_1}| \leq e_r $, and then let $\hat{a}=\frac{\hat{a}_e+\hat{a}_s}{2}$.
\end{itemize}

\subsection{The estimator of $a$}
At first, we consider the case when there is no change point in $[0,T]$. Let $\alpha_1=\alpha_2=\alpha_3=1$ and $T=7500$. Divide the whole time into 5 intervals with the same length, then get the estimator of $a_t$ in each interval. The simulation results based on 1000 replications are stated in Table~\ref{table_1}.

\begin{table}[H]
\center
\caption{\bfseries The estimator of $a_t$ when $a_t\equiv1$ and $T=7500$}
\label{table_1}
{
\begin{tabular}{|lccccc|}
\hline
the interval   & (0,T/5]& (T/5,2T/5]& (2T/5,3T/5]&  (3T/5,4T/5]& (4T/5,T]       \\
\hline
mean&1.015 &1.012 &0.991 &1.003 &1.011 \\
variance&0.0326&0.0229 &0.0261&0.0246 &0.0225\\
mean square error&0.0328 &0.0231 &0.0261 &0.0246 &0.0226\\
coverage probability &0.949&0.952&0.947&0.935&0.956\\
\hline
\end{tabular}}
\end{table}

Next, let $T=15000$, and the simulation results based on 1000 replications are reported in Table~\ref{table_2}.
\begin{table}[H]
\center
\caption{\bfseries The estimator of $a_t$ when $a_t\equiv1$ and $T=15000$}
\label{table_2}
{
\begin{tabular}{|lccccc|}

\hline
the interval   & (0,T/5]& (T/5,2T/5]& (2T/5,3T/5]&  (3T/5,4T/5]& (4T/5,T]       \\
\hline
mean&1.010 &1.010 &1.005 &1.007 &1.000 \\
variance&0.0169 &0.0111 &0.0114 &0.0106 &0.0113\\
mean square error&0.0170 &0.0112 &0.0115 &0.0107 &0.0113\\
coverage probability &0.948&0.958&0.950&0.956&0.955\\
\hline
\end{tabular}}
\end{table}

From the above two tables, one can find that when the time $T$ is doubled, the estimator's variance roughly becomes half that of the former one. It matches Theorem~\ref{t5}, which implies that the estimator's variance has the order of $1/T$. Next we consider the normality of $\frac{f_{t_0,t_1}'(\hat{a})(\hat{a}-a)}{g^{1/2}_{t_0,t_1}(\hat{a})}$. When $T=15000$, the normal QQ plots of the five estimators are stated in Figure~\ref{fig:1} .

\begin{figure}[H]
\centering\includegraphics[width=6in,height=4in]{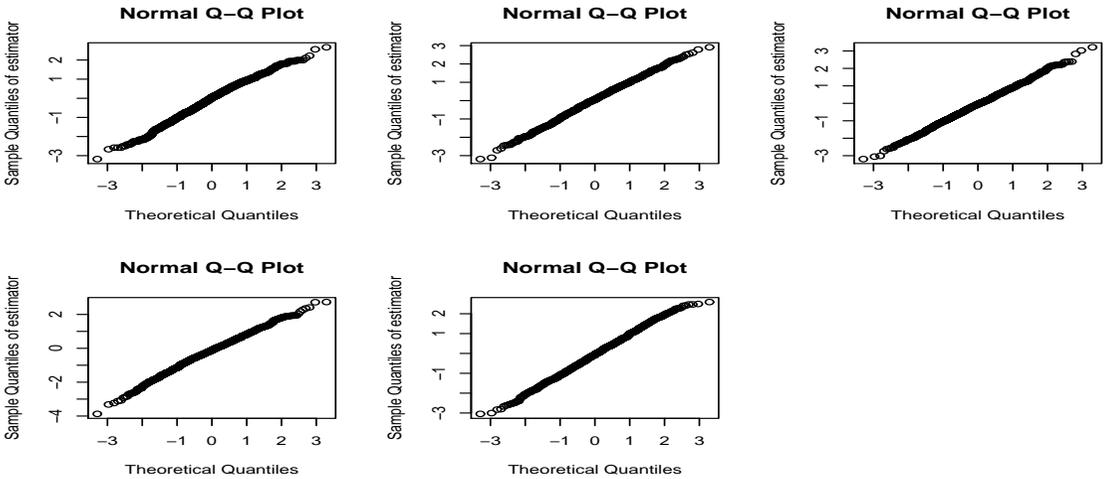}
\caption{the normal QQ plots of the five estimators}\label{fig:1}
\end{figure}

Now, we consider the case when there is a change point on $[0,T]$. Recall $\alpha_t=\alpha_3$ for $t \in (\frac{2T}{3},T]$. Let $\alpha_1=\alpha_2=1$, $\alpha_3=0.5$ and $T=7500$. Divide the whole time into $5$ intervals with the same length. From the table below, the estimator of the change point in the fourth interval can be obtained in the fifth. The simulation results based on $1000$ replications are reported in Table 3. This is consistent with Theorem~\ref{t6}.

\begin{table}[H]
\center
\caption{\bfseries The estimator of $a_t$ when $\alpha_1=\alpha_2=1$ and $\alpha_3=0.5$}
\label{table_3}
{
\begin{tabular}{|lccccc|}
\hline
the interval   & (0,T/5]& (T/5,2T/5]& (2T/5,3T/5]&  (3T/5,4T/5]& (4T/5,T]       \\
\hline
mean&1.007 &1.010 &1.017 &0.644 &0.499  \\
variance&0.0329 &0.0233 &0.0232 &0.0139 &0.0124\\
\hline
\end{tabular}}
\end{table}

\subsection{Testing the change point of $a_t$}

Consider testing the change point by calculating the $L_i$. Let $T=7500$, $a_1=a_2=1,\,a_3\in\{0,1,\cdots\,,6\}.$ Divide the whole time into 5 intervals with the same length, and we can calculate $L_i,\,i\in \{2,3,4\}$ by (\ref{defstal}). For $i=2,3,4$, $(0,R_1]$ is the $(i-1)$th interval and $(R_2,T]$ is the $(i+1)$th interval. $\chi_1(0.05)$ is $L_i$'s  95\% confidence upper limit when there is no change point. Recalling the change point's location and
Theorem \ref{t11}, one can find that $L_2$ follows the chi-square distribution with $1$ degree of freedom with all $a_3$. In contrast, $L_3$ and $L_4$ depend on $a_3$. Table 4 reports the mean of $L_i$ and the proportion of $L_i$ larger than $\chi_1(0.05)$ based on $500$ replications. One can see that both size and power are well.

\begin{table}[H]

\caption{\bfseries the performance of $L_i$, $T=7500,\, a_1=a_2=1$}
\label{table_4}
{
\begin{tabular}{|lccccccc|}
\hline
$a_3$   & 0& 1& 2&  3& 4& 5 &6        \\
\hline
$i=2$&&&&&&&\\
mean of $L_i$&1.0902&1.0401&1.0094&0.8893&0.9400&1.0105& 0.9518 \\
the proportion larger than $\chi_1(0.05)$&0.054&0.042&0.044&0.036&0.04&0.054&0.042\\
\hline
$i=3$&&&&&&&\\
mean of $L_i$&13.5413&1.0099&5.3605&13.1819&18.5253&23.6426& 26.7377  \\
the proportion larger than $\chi_1(0.05)$&0.946&0.052&0.59&0.982&0.994&1&1\\
\hline
$i=4$&&&&&&&\\
mean of $L_i$&28.5188&0.9889&6.9719&13.3156&16.4082&17.4637& 17.1553\\
the proportion larger than $\chi_1(0.05)$&1&0.044&0.792&0.998&1&1&1\\
\hline
\end{tabular}}
\end{table}

\subsection{Detection of the change point in some interval}

Consider the detection of the change point when we know there is only one change point in the interval $[0,T]$. Let  $T=7500,\,\alpha_1=\alpha_2=1$ and $\alpha_3=5$ .  Divide the whole time into $k=5$ intervals with the same length. Moreover, we know the change point is at $t=\frac{2}{3}T=5000$. Detect the location of the change point by the method in Section 3.2 in 500 simulation trials, and record the interval $V^{(q)}$ after $q$th division in each trial in Table \ref{table_5}, where $q\in \{1,2,3\}$.
Table \ref{table_6} shows the proportion of the interval $V^{(q)}$ that contains the change point and the number of different $V^{(q)}$ in the 500 trials.

\begin{table}[H]

\caption{\bfseries The times that $V^{(q)}$ occur in the intervals}
\label{table_5}
{
\begin{tabular}{|r|ccccc|}
\hline
$V^{(q)}$   &&&&&  \\
\hline
$V$& $[1500,6000]:500$&&&&\\
$V^{(1)}$&$[3300,6000]: 500$&&&&\\
$V^{(2)}$&$[4380,6000]:388$&$[3840,5460]: 112$&&&\\
$V^{(3)}$&$[4704,5676]: 166$&$[5028,6000]: 86$&$[4488,5460]: 109$&$[4380,5352]:136$&$[4164,5136]: 3$\\
\hline
\end{tabular}}
\end{table}

\begin{table}[H]

\caption{\bfseries  The proportion of the $V^{(q)}$ contains the change points}
\label{table_6}
{
\begin{tabular}{|r|cccccc|}
\hline
$V^{(q)}$ &$ V $&$V^{(1)}$ &$V^{(2)}$  &$V^{(3)}$ &$V^{(4)}$  &$V^{(5)}$   \\
\hline
Number of the different $V^{(q)}$& 1&1&2&5&12&27\\
Proportion  contains the change points& 100\%& 100\% &100\% & 82.8\% &75.2\% &67.2\%\\
Length of the $V^{(q)}/T$& 0.6 & 0.36 & 0.216 &0.1296 &0.0777 &0.0465 \\
\hline
\end{tabular}}
\end{table}

\subsection{Detection of change points with the number of change points known}
We consider detecting more than one change point with the number of change points known. Let $\alpha_1=\alpha_3=1$ and $\alpha_2=5$. Divide the whole time $[0,T]$ into 30 intervals with the same length. Suppose that there are two change points. The first change point is $\frac{23T}{75}$ in the 10th interval (3T/10,T/3] and another one is $\frac{2T}{3}$ at the boundary between the 20th interval (19T/30,2T/3] and the 21st interval (2T/3,7/10T]. Since the number of change points is two we calculate $L_i$ and seek its first two largest local maximums according to Theorem \ref{t11}. The values of $L_i$ are partly reported based on the simulation results from 100 replications in Table \ref{table_7}. One can see that all the two largest local maximums of $L_i$ occur in the intervals $(4T/15,3T/10]$, $(3T/10,T/3]$, $(19T/30,2T/3]$ and $(2T/3,7/10T]$, which matches Theorem \ref{t11}.

\begin{table}[H]
\caption{\bfseries The times that the two largest local maximums of $L_i$ occur in the intervals}
\label{table_7}
\center
{\begin{tabular}{|r|cccc|}
\hline
$T \setminus intervals$   & (4T/15,3T/10]& (3T/10,T/3]& (19T/30,2T/3]&  (2T/3,7/10T]      \\
\hline
60000&38 &62 &53 &47\\
\hline
\end{tabular}}
\end{table}

\subsection{Detection of change points with the number of change points unknown}

 This section is to use the above model with two change points and the same partition of the whole time. However, we do not know how many change points it has in advance. As before, calculate the values of $L_i$. We need to count how many $L_i$ are bigger than $c_T$ to determine the number of change points according to the paragraph above Theorem \ref{t12}. Note that $c_T$ tends to infinity. We below report only the first four largest local maximums of $L_i$ based on $100$ replications in Table \ref{table_8}. When listing the values of the local maximums of $L_i$ we observe that there is a significant drop from the second-largest local maximum to the third-largest maximum. According to the table below, the average value of the third-largest maximum is $4.237$, and its max value of $11.370$ is small than $\chi(0.01/30)=12.8731$. Therefore we conclude that $\hat{s}_p=2$. After getting the number of change points one can use the same method as Section 3.4 to detect the locations of change points at $\frac{23T}{75}=18400$ in Tables \ref{table_9}-\ref{table_10} and $\frac{2T}{3}=40000$ in Tables \ref{table_11}-\ref{table_12}.

\begin{table}[H]

\center
\caption{\bfseries The first four largest local maximums of $L_i$ in the intervals when $T=60000$}
\label{table_8}
{
\begin{tabular}{|r|cc|cc|}
\hline
 largest local maxmimum   & first& second& third&fourth        \\
\hline
mean&54.012 &43.150 &4.237 &2.602  \\
max&66.930 &50.100 &11.370 &7.578\\
min&43.5458 &34.883 &1.125 &0.766\\
\hline
\end{tabular}}
\end{table}

\begin{table}[H]

\center
\caption{\bfseries The times that $V^{(q)}$ occur in the intervals with change point at $t=18400$}
\label{table_9}
{
\begin{tabular}{|r|cccc|}
\hline
$L$&[16000,22000]: 327 &[14000,20000]: 173&&\\
$V$& $[17200,20800]: 327$& [15200,18800]: 173&&\\
$V^{(1)}$&$[17200,19360]: 136$ & [17920,20080]: 77 & [16640,18800]: 173& [18640,20800]: 114\\
\hline
\end{tabular}}
\end{table}

\begin{table}[H]
\caption{\bfseries  The proportion of the $V^{(q)}$ contains the change point at $t=18400$ }
\label{table_10}
{
\begin{tabular}{|r|cccccc|}
\hline
$V^{(q)}$ &$L$&$ V $&$V^{(1)}$ &$V^{(2)}$  &$V^{(3)}$ &$V^{(4)}$   \\
\hline
Number of the different $V^{(q)}$ & 2&2&4&9&24&58\\
Proportion  contains the change points& 100\%& 100\% &77.2\% & 72.6\% &67.0\% &52.0\%\\
Length of the $V^{(q)}/T$& 0.1& 0.06& 0.036 & 0.0216 &0.0130 &0.0078 \\
\hline
\end{tabular}}
\end{table}

\begin{table}[H]

\caption{\bfseries The times that $V^{(q)}$ occur in the intervals with change point at $t=40000$}
\label{table_11}
{
\begin{tabular}{|r|cccc|}
\hline
$L$&[36000,42000]: 376 &[38000,44000]: 124&&\\
$V$& $[37200,40800]: 376$& [39200,42800]: 124&&\\
$V^{(1)}$&$[38640,40800]: 376$ & [39200,41360]: 110 & [40640,42800]: 8& [39920,42080]: 6\\
\hline
\end{tabular}}
\end{table}

\begin{table}[H]

\caption{\bfseries  The proportion of the $V^{(q)}$ contains the change point at $t=40000$  }
\label{table_12}
{
\begin{tabular}{|r|cccccc|}
\hline
$V^{(q)}$ &$L$&$ V $&$V^{(1)}$ &$V^{(2)}$  &$V^{(3)}$ &$V^{(4)}$   \\
\hline
Number of the different $V^{(q)}$ & 2&2&4&11&22&39\\
Proportion  contains the change points& 100\%& 100\% &98.4\% & 95.6\% &80.6\% &72.4\%\\
Length of the $V^{(q)}/T$& 0.1& 0.06& 0.036 & 0.0216 &0.0130 &0.0078\\
\hline
\end{tabular}}
\end{table}

\subsection{The estimator of $a$ in randomization setting }
This subsection is to consider the $y_t$ in the randomization setting. We design the model as follows:
\begin{itemize}
\item[1.] The whole time is $[0,T]$.
\item[2.] For any $t$, $y_t=1$ with probability $p,\,p\in \{0.1,0.25,0.5,0.75,0.9\}$
\item[3.] For any $t \in [1,\frac{23T}{75}]$, $a_t=\alpha_1$.
\item[4.] For any $t \in (\frac{23T}{75},\frac{2T}{3}]$, $a_t=\alpha_2$.
\item[5.] For any $t \in (\frac{2T}{3},T]$, $a_t=\alpha_3$.
\end{itemize}

At first, consider the case when there is no change point on $[0,T]$. We let $\alpha_1=\alpha_2=\alpha_3=1$ and $T=7500$ the same as in Section 4.2. Then we divide the whole time into 5 intervals with the same length. We can get the estimator of $a_t$ in the five intervals respectively. The simulation results based on 1000 replications are stated in Table~\ref{table_13}.

\begin{table}[H]
\center
\caption{\bfseries The estimator of $a_t$ when $a_t\equiv1$ and $T=7500$}
\label{table_13}
{
\begin{tabular}{|lccccc|}
\hline
the interval   & (0,T/5]& (T/5,2T/5]& (2T/5,3T/5]&  (3T/5,4T/5]& (4T/5,T]       \\
\hline
{$p=0.1$}&&&&&\\
mean     &1.026 &1.008 &1.008 &1.002 &1.014\\
variance &0.0875&0.0525 &0.0515&0.0520 &0.0516\\
mean square error&0.0881 &0.0526 &0.0515 &0.0520 &0.0519\\
coverage probability &0.946&0.955&0.95&0.944&0.947\\
\hline
$p=0.25$&&&&&\\
mean     &1.023 &1.010 &1.008 &1.003 &1.004\\
variance &0.0464&0.0283 &0.0287&0.0263 &0.0232\\
mean square error&0.0469 &0.0284 &0.0288 &0.0263 &0.0233\\
coverage probability &0.949&0.962&0.948&0.945&0.961\\
\hline
$p=0.5$&&&&&\\
mean&1.011 &1.009 &1.003 &1.000 &1.014\\
variance&0.0316&0.0257 &0.0249&0.0228 &0.0236\\
mean square error&0.0317 &0.0258 &0.0250 &0.0228 &0.0238\\
coverage probability &0.952&0.940&0.947&0.95&0.947\\
\hline
$p=0.75$&&&&&\\
mean     &1.019 &0.998 &1.009 &1.007 &1.022\\
variance &0.0341&0.0270 &0.0251&0.0299 &0.0261\\
mean square error&0.0345&0.0270 &0.0252 &0.0299 &0.0266\\
coverage probability &0.949&0.945&0.958&0.943&0.956\\
\hline
$p=0.9$&&&&&\\
mean     &1.0178	 &1.021 &1.015 &1.005 &1.005\\
variance &0.0377&0.0306&0.0338&0.0328 &0.0323\\
mean square error&0.0380 &0.0310 &0.0340 &0.0329 &0.0323\\
coverage probability &0.965 &0.961 &0.948&0.942&0.942\\
\hline
\end{tabular}}
\end{table}

From the above table, one can find that when the probability $p=0.5$, the estimator's variance is smallest and similar to that in Section 4.2 where $y_t$ is known. Although the estimator's performance varies with $p$, the result is still excellent even at $p=0.1$ or $0.9$.

Then consider the case when there is a change point on $[0,T]$. Recall $\alpha_t=\alpha_3$ for $t \in (\frac{2T}{3},T]$. Let $\alpha_1=\alpha_2=1$ and $\alpha_3=0.5$ and $T=7500$. Divide the whole time into $5$ intervals with the same length. The simulation results based on $1000$ replications are reported in Table~\ref{table_14}.
\begin{table}[H]
\center
\caption{\bfseries The estimator of $a_t$ when $\alpha_1=\alpha_2=1$, $\alpha_3=0.5$ and $T=7500$  }
\label{table_14}
{\begin{tabular}{|lccccc|}
\hline
the interval   & (0,T/5]& (T/5,2T/5]& (2T/5,3T/5]&  (3T/5,4T/5]& (4T/5,T]       \\
\hline
$p=0.1$&&&&&\\
mean&1.008 &1.005 &1.014 &0.645 &0.504  \\
variance&0.0850 &0.0532 &0.0545 &0.0373 &0.0325\\
\hline
$p=0.25$&&&&&\\
mean&1.021 &1.008 &1.007 &0.633 &0.503  \\
variance&0.0420 &0.0293 &0.0298 &0.0189 &0.0157\\
\hline
 $p=0.5$&&&&&\\
mean&1.014 &1.005 &1.005 &0.642 &0.506  \\
variance&0.0331 &0.0242 &0.0230 &0.0150 &0.0128\\
\hline
$p=0.75$&&&&&\\
mean&1.011 &1.002 &1.012 &0.639 &0.509  \\
variance&0.0353 &0.0277 &0.0275 &0.0151 &0.0129\\
\hline
$p=0.9$&&&&&\\
mean&1.012 &1.013 &1.000 &0.647 &0.514  \\
variance&0.0407 &0.0344 &0.0323 &0.0183 &0.0167\\
\hline
\end{tabular}}
\end{table}

\subsection{Location of the known change point with $y_t\equiv 1$}
This subsection is to test our method of locating the change point in Section 3.5 without edge-step. We design the model as following:
\begin{itemize}
\item[1.] The whole time is $[0,T]$.
\item[2.] For any $t\in [0,T]$ , $y_t\equiv 1$.
\item[3.] For any $t \in [1,\frac{23T}{75}]$, $a_t=\alpha_1$.
\item[4.] For any $t \in (\frac{23T}{75},\frac{2T}{3}]$, $a_t=\alpha_2$.
\item[5.] For any $t \in (\frac{2T}{3},T]$, $a_t=\alpha_3$.
\end{itemize}

Consider the case that we have known there is only one change point in the interval $[0,T]$.  Let $\alpha_1=\alpha_2=1,\,\alpha_3\in\{0,1,2,3,4,5,6\}$ and $T\in\{7500,15000\}$. The location of the change point is at $R=\frac{2T}{3}+1=0.6667 T+1$. The simulation results based on 1000 replications are stated in Table~\ref{table_15}--\ref{table_16}.

\begin{table}[H]

\caption{\bfseries the performance of $\hat{R}$, $T=7500,\, a_1=a_2=1$}
\label{table_15}
\resizebox{\linewidth}{!}
{
\begin{tabular}{|lcccccc|}
\hline
$a_3$   & 0& 1& 2&  3& 4& 5        \\
\hline
the mean of $\hat{a}$ &0.4185&1.0036&1.4238&1.7603&2.0061&2.2221\\
the mean of $\hat{R}$ &4916.944 &4042.591 & 4730.438&4836.489&4901.25&4920.298\\
the mean of $\frac{|\hat{R}-R|}{T}$ &0.0138&0.2019&0.0485&0.0258&0.0166&0.0134 \\
the mean of $\frac{|\hat{R}-R|^2}{T^2}$ &5.8903e{-04}&0.0615&0.0063&0.0019&9.0794e-04&6.0859e-04\\
the 95\% cover interval of $\frac{\hat{R}}{T}$&[0.5943,0.6801]&[0.1529,0.9076]&[0.4299,0.7259]&[0.5404,0.6960]&[0.5755,0.6825]&[0.5999,0.6801]\\
\hline
\end{tabular}}
\end{table}

\begin{table}[H]

\caption{\bfseries the performance of $\hat{R}$, $T=15000,\, a_1=a_2=1$}
\label{table_16}
\resizebox{\linewidth}{!}
{
\begin{tabular}{|lcccccc|}
\hline
$a_3$   & 0& 1& 2&  3& 4& 5        \\
\hline
the mean of $\hat{a}$ &0.4185&1.0019&1.4300&1.7464&1.9990&2.2061\\
the mean of $\hat{R}$ &9902.4 &8068.795 &9636.639&9826.089&9891.427&9912.922\\
the mean of $\frac{|\hat{R}-R|}{T}$ &0.0080&0.2003&0.0297&0.0141&0.0089&0.0071 \\
the mean of $\frac{|\hat{R}-R|^2}{T^2}$ &2.2765e-04&0.0601&0.0029&7.3896e-04&2.7954e-04&1.7299e-04\\
the 95\% cover interval of $\frac{\hat{R}}{T}$&[0.6199,0.6743]&[0.1495,0.8965]&[0.5055,0.6913]&[0.5841,0.6777]&[0.6122,0.6736]&[0.6239,0.6728]\\
\hline
\end{tabular}}
\end{table}

\subsection{Location of the known change point with randomization setting}
This subsection is to test our method of locating the change point in Section 3.5 when $y_t$ in the randomization setting. We design the model as following:
\begin{itemize}
\item[1.] The whole time is $[0,T]$.
\item[2.] For any $t\in [0,T]$ , $y_t= 1$ with probability $p$.
\item[3.] For any $t \in [1,\frac{23T}{75}]$, $a_t=\alpha_1$.
\item[4.] For any $t \in (\frac{23T}{75},\frac{2T}{3}]$, $a_t=\alpha_2$.
\item[5.] For any $t \in (\frac{2T}{3},T]$, $a_t=\alpha_3$.
\end{itemize}

Consider the case that we have known there is only one change point in the interval $[0,T]$.  Let $\alpha_1=\alpha_2=1,\,\alpha_3=2$, $T\in\{7500,15000\}$ and $p\in\{0.2,0.4,0.6,0.8\}$. The location of the change point is also at $R=\frac{2T}{3}+1=0.6667T+1$. The simulation results based on 1000 replications are stated in Table~\ref{table_16a}--\ref{table_17}.

\begin{table}[H]
\caption{\bfseries the performance of $\hat{R}$, $T=7500,\, a_1=a_2=1,\,a_3=2$}
\label{table_16a}
{
\begin{tabular}{|lcccc|}
\hline
$p$   & 0.2& 0.4& 0.6& 0.8        \\
\hline
the mean of $\hat{a}$ &1.6841&1.5962&1.5386&1.4847\\
the mean of $\hat{R}$ &4722.623 &4783.508 & 4791.303&4743.776\\
the mean of $\frac{|\hat{R}-R|}{T}$ &0.0436&0.0351&0.0347&0.0437 \\
the mean of $\frac{|\hat{R}-R|^2}{T^2}$ &0.0058&0.0039&0.0035&0.0053\\
the 95\% cover interval of $\frac{\hat{R}}{T}$&[0.4271,0.6967]&[0.4814,0.6959]&[0.4845,0.7004]&[0.4642,0.7113]\\
\hline
\end{tabular}}
\end{table}

\begin{table}[H]
\caption{\bfseries the performance of $\hat{R}$, $T=15000,\, a_1=a_2=1,\,a_3=2$}
\label{table_17}
{
\begin{tabular}{|lcccc|}
\hline
$p$   & 0.2& 0.4& 0.6& 0.8        \\
\hline
the mean of $\hat{a}$ &1.6751&1.5983&1.5315&1.4751\\
the mean of $\hat{R}$ &9661.85 &9774.162 & 9740.992&9684.847\\
the mean of $\frac{|\hat{R}-R|}{T}$ &0.0261&0.0182&0.0205&0.0256 \\
the mean of $\frac{|\hat{R}-R|^2}{T^2}$ &0.0022&0.0011&0.0014&0.0019\\
the 95\% cover interval of $\frac{\hat{R}}{T}$&[0.5290,0.6835]&[0.5698,0.6807]&[0.5536,0.6823]&[0.5316,0.6859]\\
\hline
\end{tabular}}
\end{table}

\section*{Acknowledgements}

We are very grateful to Professor Bhamidi and Doctor Jin for discussing with us and kindly providing us with the computer code for the simulations in \cite{bhamidi2018change}.

\section*{appendix: The proof of Theorems}

\subsection*{The proof of the central limit theorem for the number of the leaves $x_t$}

At first we prove Lemma \ref{l1}.
\begin{proof}[Proof of Lemma \ref{l1}]
When $y_t=v_t-v_{t-1}=1$, $G_t$ is formed by modifying $G_{t-1}$ by taking a vertex-step. When $y_t=v_t-v_{t-1}=0$, $G_t$ is formed from modifying $G_{t-1}$ by taking a edge-step. It follows that
\begin{eqnarray}\label{0.1p1}
\nonumber&&P(x_{t+1}=x_t+1\mid\Gamma_{t})=y_t(1-m_tx_t)\\
\nonumber&&P(x_{t+1}=x_t\mid\Gamma_{t})=y_tm_tx_t+(1-y_t)(1-m_tx_t)^2\\
\nonumber&&P(x_{t+1}=x_t-1\mid\Gamma_{t})=(1-y_t)[2m_tx_t(1-m_tx_t)+x_tm_t^2]\\
&&P(x_{t+1}=x_t-2\mid\Gamma_{t})=(1-y_t)(x_t-1)x_tm_t^2.
\end{eqnarray}
These imply that
\begin{equation}\label{1.1p}
E(\mathbf{\tilde{x}_{t+1}}\mid \Gamma_{t})=A_{t+1}\mathbf{\tilde{x}_{t}}.
\end{equation}
So Lemma \ref{l1} follows.
\end{proof}

Now we need some lemmas to prove the central limit theorem for the number of the leaves $x_t$.
\begin{lemma}\label{l2}
Define
 $$z_{i+1}=\frac{\prod\limits_{j=i+2}^{t}  (1-m_j)[1-(1-y_j)m_j][x_{i+1}-(1-m_{i+1})(1-(1-y_{i+1})m_{i+1})x_{i}-y_{i+1}]}{\prod\limits_{j=1}^{t}  (1-m_j)[1-(1-y_j)m_j]}$$  for $0 \leq i \leq t-2$
and $$z_t=\frac{x_{t}-(1-m_{t})[1-(1-y_{t})m_{t}]x_{t-1}-y_{t}}{\prod\limits_{j=1}^{t}  (1-m_j)[1-(1-y_j)m_j]}.$$ Then
\begin{equation}\label{1.1p7}
E(z_i^2\mid\Gamma_{i-1})(\prod\limits_{j=1}^{t}  (1-m_j)^2[1-(1-y_j)m_j]^2) \leq C,
\end{equation}
C is a constant.
\end{lemma}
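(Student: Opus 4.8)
The plan is to recognize each $z_i$ as a normalized one-step martingale increment and thereby reduce the claim to a uniform bound on the one-step conditional variance of $x_i$. Write $b_j=(1-m_j)[1-(1-y_j)m_j]$ for brevity. First I would note that the ratio $\prod_{j=i+2}^t b_j\big/\prod_{j=1}^t b_j$ equals $\big(\prod_{j=1}^{i+1}b_j\big)^{-1}$, so the two displayed definitions collapse into the single expression $z_i=\big(x_i-b_ix_{i-1}-y_i\big)\big/\prod_{j=1}^i b_j$, valid for every $1\le i\le t$. Reading the second row of the matrix $A_i$ of Definition~\ref{de1} against $\tilde{\mathbf{x}}_{i-1}=(x_{i-1}^2,x_{i-1},1)'$, Lemma~\ref{l1} gives $E(x_i\mid\Gamma_{i-1})=b_ix_{i-1}+y_i$; hence the numerator of $z_i$ is precisely the martingale difference $d_i:=x_i-E(x_i\mid\Gamma_{i-1})$, and $z_i=d_i/\prod_{j=1}^i b_j$.

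Under Assumption~\ref{assumptionA1} the quantities $v_j$, and therefore $y_j$, $m_j$ and $b_j$, are deterministic. Squaring, taking $E(\,\cdot\mid\Gamma_{i-1})$, and cancelling the factor $\prod_{j=1}^i b_j^2$ leaves
\[E(z_i^2\mid\Gamma_{i-1})\prod_{j=1}^t b_j^2=\mathrm{Var}(x_i\mid\Gamma_{i-1})\prod_{j=i+1}^t b_j^2.\]
Thus the lemma follows once I establish two facts: that the conditional variance $\mathrm{Var}(x_i\mid\Gamma_{i-1})$ is bounded by an absolute constant, and that $\prod_{j=i+1}^t b_j^2\le 1$.

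For the variance bound I would use the transition probabilities (\ref{0.1p1}) directly. They show that the increment $\Delta_i:=x_i-x_{i-1}$ takes values only in $\{-2,-1,0,1\}$, so $|\Delta_i|\le 2$ and consequently $\mathrm{Var}(x_i\mid\Gamma_{i-1})=\mathrm{Var}(\Delta_i\mid\Gamma_{i-1})\le E(\Delta_i^2\mid\Gamma_{i-1})\le 4$, uniformly in $i$. For the product bound I would show $0<m_j<1$ for the relevant indices $j\ge 2$: since $m_j=\frac{1+a_j}{2j-1+a_jv_{j-1}}$ with $a_j>-1$ and $v_{j-1}\le j\le 2j-1$, both $1-m_j$ and $1-(1-y_j)m_j$ lie in $(0,1]$, whence $b_j\in(0,1]$ and $\prod_{j=i+1}^t b_j^2\le1$. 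Combining the two bounds yields the asserted constant $C$ (one may take $C=4$).

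The main obstacle is the elementary but case-sensitive verification that $0<m_j<1$ and, equivalently, that $m_jx_{j-1}\le m_jv_{j-1}\le1$, so that the probabilities in (\ref{0.1p1}) are genuinely nonnegative and the increment $\Delta_i$ is bounded; this requires separately tracking the sign of $a_j$ (both $a_j\ge0$ and $-1<a_j<0$) and using $v_{j-1}\le j$ coming from the initial condition $v_0=1$. Once positivity and boundedness of $m_j$ are in hand, the remaining steps are the telescoping identity and the trivial variance bound from bounded increments.
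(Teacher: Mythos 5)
Your proposal is correct, and it shares the paper's overall skeleton: like the paper, you read off from the second row of Lemma~\ref{l1} that the numerator of $z_i$ is exactly the martingale difference $x_i-E(x_i\mid\Gamma_{i-1})$, cancel the common product of factors $(1-m_j)[1-(1-y_j)m_j]$ so that the left side of (\ref{1.1p7}) becomes $Var(x_i\mid\Gamma_{i-1})\prod_{j=i+1}^{t}(1-m_j)^2[1-(1-y_j)m_j]^2$, and then discard the remaining product using $0\le m_j\le 1$. Where you genuinely differ is the key estimate on the one-step conditional variance: the paper computes $E[(x_i-(1-m_i)(1-(1-y_i)m_i)x_{i-1}-y_i)^2\mid\Gamma_{i-1}]$ explicitly as a polynomial in $m_ix_{i-1}$ and bounds it through the inequality $m_ix_{i-1}<\frac{1+a_i}{2+a_i}<1$ (see (\ref{1.1p5})--(\ref{1.1p6})), arriving at the bound $\frac{2-y_i}{4}+7(1-y_i)$, whereas you simply observe from the transition probabilities (\ref{0.1p1}) that the increment $\Delta_i=x_i-x_{i-1}$ takes values in $\{-2,-1,0,1\}$, so $Var(x_i\mid\Gamma_{i-1})\le E(\Delta_i^2\mid\Gamma_{i-1})\le 4$ uniformly. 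Your bounded-increment argument is more elementary, avoids the moment algebra entirely, and yields the explicit constant $C=4$; what it gives up is the exact second-moment formula, which the paper reuses later to obtain the matching \emph{lower} bound (\ref{1.1p8}) on the conditional variance in the proof of Lemma~\ref{t1} --- something the crude bound $|\Delta_i|\le 2$ cannot provide. One shared blemish worth noting: since $v_0=1$, one has $m_1=\frac{1+a_1}{1+a_1}=1$, so the factor $j=1$ in the defining denominator $\prod_{j=1}^{t}(1-m_j)[1-(1-y_j)m_j]$ vanishes and $z_i$ itself is formally ill-defined; both your argument and the paper's implicitly work with the product-multiplied quantity appearing in (\ref{1.1p7}), for which everything goes through, and your explicit restriction to indices $j\ge 2$ when proving $0<m_j<1$ shows you noticed the degenerate first factor.
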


\begin{proof}[Proof of Lemma \ref{l2}]
Note that for any $i$,
\begin{equation}\label{1.1p3}
E(z_i^2\mid\Gamma_{i-1})= \frac{\prod\limits_{j=i+1}^{t} (1-m_j)^2[1-(1-y_j)m_j]^2 E[(x_{i}-(1-m_{i})[1-(1-y_{i})m_{i}]x_{i-1}-y_{i})^2|\Gamma_{i-1}]}{\prod\limits_{j=1}^{t}  (1-m_j)^2[1-(1-y_j)m_j]^2}.
\end{equation}
It follows from Lemma \ref{l1} that
\begin{eqnarray}\label{1.1p4}
\nonumber &&E[(x_{i}-(1-m_{i})[1-(1-y_{i})m_{i}]x_{i-1}-y_{i})^2\mid\Gamma_{i-1}]=(2-y_i)(1-m_i x_{i-1})m_i x_{i-1}\\
\nonumber&&+[2(1-y_i)(2-y_i)m_i-(1-y_i)^2m_i^2]m_i^2 x_{i-1}^2+3m_i^2(y_i-1)x_{i-1}.
\end{eqnarray}
Note that $m_i \leq 1$ and
\begin{equation}\label{1.1p5}
  \frac{1+a_i}{2+a_i} \frac{x_{i-1}}{i}  \leq m_i x_{i-1}=\frac{1+a_i}{2i-1+a_iv_{i-1}}x_{i-1} \leq \frac{v_{i-1}+a_i v_{i-1}}{2i-1+a_iv_{i-1}}<\frac{1+a_i}{2+a_i}.
\end{equation}
This implies that
\begin{equation}\label{1.1p6}
  E[(x_{i}-(1-m_{i})[1-(1-y_{i})m_{i}]x_{i-1}-y_{i})^2\mid\Gamma_{i-1}]\leq \frac{2-y_i}{4}+7(1-y_i).
\end{equation}
This, together with (\ref{1.1p3}), implies that there exists a constant $C$ such that
\begin{equation*}
(\prod\limits_{j=1}^{t}  (1-m_j)^2[1-(1-y_j)m_j]^2)E(z_i^2\mid\Gamma_{i-1}) \leq C.
\end{equation*}
\end{proof}

\begin{proof}[Proof of Lemma \ref{t1}]

From Lemma \ref{l1} one may obtain
\begin{equation}\label{1.1p1}
Ex_t=y_t+\sum\limits_{i=1}^{t-1} y_i \prod\limits_{j=i+1}^{t}  (1-m_j)[1-(1-y_j)m_j]+x_0\prod\limits_{j=1}^{t}  (1-m_j)[1-(1-y_j)m_j].
\end{equation}
It follows that
\begin{equation}\label{1.1p2}
Ex_t \geq y_t+\sum\limits_{i=[\frac{pt}{2}]}^{t-1} y_i \prod\limits_{j=i+1}^{t}  (1-m_j)[1-(1-y_j)m_j] \geq \prod\limits_{j=[\frac{pt}{2}]+1}^{t}  (1-m_j)[1-(1-y_j)m_j] \sum\limits_{i=[\frac{pt}{2}]}^{t} y_i.
\end{equation}
Recalling Assumption \ref{assumptionA2} we have
$$\sum\limits_{i=[\frac{pt}{2}]}^{t} y_i>\frac{pt}{2}.$$ We can also get a constant $c_0>0$ such that
$$\prod\limits_{j=[\frac{pt}{2}]+1}^{t}  (1-m_j)[1-(1-y_j)m_j]>c_0.$$ Note that $\frac{x_t}{t} \leq \frac{v_t}{t} \leq \frac{t+1}{t}$.  These ensure (\ref{1.1m}) in the main paper.

 Consider (\ref{1.1v}) in the main paper next. Write $$x_t-Ex_t=\Big(\sum\limits_{i=1}^{t}z_i\Big)\Big(\prod\limits_{j=1}^{t}  (1-m_j)[1-(1-y_j)m_j]\Big).$$ From Lemma \ref{l2} one can verify that $\{z_i\}_{1 \leq i \leq t}$ are martingale difference.  A direct calculation yields that
 $$Var(x_t)=\Big(\sum\limits_{i=1}^{t} Var(z_i)\Big)\Big(\prod\limits_{j=1}^{t}  (1-m_j)^2[1-(1-y_j)m_j]^2\Big).$$ Note that $Ez_i=0$ and $Var(z_i)=E(z_i^2)$. This, together with Lemma \ref{l2}, implies that  $Var(x_t)=\sum\limits_{i=1}^{t} Var(z_i) \leq Ct$.

When $t$ is big enough there is $\tilde{t}$ such that $\frac{\tilde{t}}{t} \leq c<1$, $\prod\limits_{j=\tilde{t}+1}^{t} (1-m_j)^2[1-(1-y_j)m_j]^2>c_1$ for a positive number $c_1$ and for $i \geq \tilde{t}$,
\begin{eqnarray}\label{1.1p8}
\nonumber &&E[(x_{i}-(1-m_{i})[1-(1-y_{i})m_{i}]x_{i-1}-y_{i})^2|\Gamma_{i-1}] \\
\nonumber\geq&& \frac{(2-y_i)(1-m_i x_{i-1})m_i x_{i-1}}{2}\\
\geq&& \frac{1}{2(2+a_i)^2}\frac{x_{i-1}}{i}.
\end{eqnarray}
This, together with (\ref{1.1m}) in the main paper and (\ref{1.1p3}), implies that
\begin{eqnarray}\label{1.1p9}
\nonumber &&Var(x_t)\\
\nonumber=&&\sum\limits_{i=1}^{t} Var(z_i)\Big(\prod\limits_{j=1}^{t}  (1-m_j)^2[1-(1-y_j)m_j]^2\Big) \\
\nonumber\geq&& \sum\limits_{i=\tilde{t}}^{t} Var(z_i)\Big(\prod\limits_{j=1}^{t}  (1-m_j)^2[1-(1-y_j)m_j]^2\Big) \\
\geq&& c_1(1-c)t\liminf\limits_{t \rightarrow \infty} \frac{Ex_t}{t} \min\limits_{i}(\frac{1}{2(2+a_i)^2}).
\end{eqnarray}
This leads to (\ref{1.1v}) in the main paper.
\end{proof}

\begin{proof}[Proof of Theorem \ref{t2}]
Recalling the definition of $z_i$ in Lemma \ref{l2} we find $$x_t-E(x_t\mid\Gamma_{t_0})=\Big(\sum\limits_{i=t_0+1}^{t}z_i\Big)\Big(\prod\limits_{j=1}^{t}  (1-m_j)[1-(1-y_j)m_j]\Big).$$
Note that $\sigma-feild$ $\{z_i\}_{1 \leq i \leq t}$ are martingale difference. 
%
Write
\begin{eqnarray}\label{1.1t3}
&&\nonumber\Big(\frac{x_{t_1}-E(x_{t_1}\mid\Gamma_{t_0})}{[Var(x_{t_1}\mid\Gamma_{t_0})]^{1/2}}, \cdots ,\frac{x_{t_k}-E(x_{t_k}\mid\Gamma_{t_{k-1}})}{[Var(x_{t_k}\mid\Gamma_{t_{k-1}})]^{1/2}}\Big)'\\
\nonumber=&&\Big(\frac{(\sum\limits_{i=t_0+1}^{t_1}z_i)\Big(\prod\limits_{j=1}^{t_1}  (1-m_j)[1-(1-y_j)m_j]\Big)}{[Var(x_{t_1}\mid\Gamma_{t_0})]^{1/2}}, \cdots ,\frac{(\sum\limits_{i=t_{k-1}+1}^{t_k}z_i)\Big(\prod\limits_{j=1}^{t_k}  (1-m_j)[1-(1-y_j)m_j]\Big)}{[Var(x_{t_k}\mid\Gamma_{t_{k-1}})]^{1/2}}\Big)'\\
\nonumber=&&\Big(\frac{\sum\limits_{i=t_0+1}^{t_1}z_i}{[Var(\sum\limits_{i=t_0+1}^{t_1}z_i)]^{1/2}}, \cdots ,\frac{\sum\limits_{i=t_{k-1}+1}^{t_k}z_i}{[Var(\sum\limits_{i=t_{k-1}+1}^{t_k}z_i)]^{1/2}}\Big)'.
\end{eqnarray}
Theorem \ref{t2} follows from Lemma \ref{l1}, Lemma \ref{l2} and martingale central limit theorem.

\end{proof}

\subsection*{the proof about the estimator $\hat{a}$}
In order to investigate the estimator $\hat{a}$, the next lemma is to describe the orders of the functions $f_{t_0,t_1}(a)$ and $g_{t_0,t_1}(a)$ and their respective derivatives.
\begin{lemma}\label{ll1}
Suppose that Assumptions \ref{assumptionA1}-\ref{assumptionA2} hold, $a_t=a$ for $0 \leq t_0 \leq t \leq t_1$ and that $\lim\limits_{t_1 \rightarrow \infty}\frac{t_0}{t_1}<1$.  Then $f_{t_0,t_1}(a)$ is monotonically decreasing and has the first two derivatives  for $-1 \leq a < \infty$. Furthermore, the following results hold with probability tending to one
\begin{itemize}
\item[(1)] For $0 \leq t_0 \leq t \leq t_1$,
\begin{equation}\label{1.00t40}
0<\liminf\limits_{t \rightarrow \infty}\frac{f_{t_0,t}(a)}{t}<  \limsup\limits_{t \rightarrow \infty}\frac{f_{t_0,t}(a)}{t}< \infty.
\end{equation}
\item[(2)]\begin{equation}\label{1.00t41}
0<\liminf\limits_{t_1 \rightarrow \infty}\frac{|f_{t_0,t_1}'(a)|}{t_1}< \limsup\limits_{t_1 \rightarrow \infty}\frac{|f_{t_0,t_1}'(a)|}{t_1}< \infty
\end{equation}
and
\begin{equation}\label{1.00t42}
0<\liminf\limits_{t_1 \rightarrow \infty}\frac{|f_{t_0,t_1}''(a)|}{t_1}< \limsup\limits_{t_1 \rightarrow \infty}\frac{|f_{t_0,t_1}''(a)|}{t_1}< \infty
\end{equation}
for $-1 \leq a < \infty$.
\item[(3)]\begin{equation}\label{1.00t41g}
0<\liminf\limits_{t_1 \rightarrow \infty}\frac{g_{t_0,t_1}(a)}{t_1} <\limsup\limits_{t_1 \rightarrow \infty}\frac{g_{t_0,t_1}(a)}{t_1}< \infty
\end{equation}
and
\begin{equation}\label{1.01t42g}
0<\liminf\limits_{t_1 \rightarrow \infty}\frac{g_{t_0,t_1}'(a)}{t_1} <\limsup\limits_{t_1 \rightarrow \infty}\frac{g_{t_0,t_1}'(a)}{t_1} < \infty
\end{equation}
for $-1 \leq a < \infty$.

\end{itemize}
\end{lemma}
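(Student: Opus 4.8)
The plan is to reduce every assertion to two elementary estimates and then run a common ``window'' argument. Write $h_j(a)=(1-m_j)\bigl[1-(1-y_j)m_j\bigr]$ and $P_{i,t_1}(a)=\prod_{j=i+1}^{t_1}h_j(a)$, where $m_j=\frac{1+a}{2j-1+av_{j-1}}$, so that the second row of the recursion in Lemma~\ref{l1} gives, exactly as in (\ref{1.1p1}) but conditioned on $\Gamma_{t_0}$,
$$f_{t_0,t_1}(a)=y_{t_1}+\sum_{i=t_0+1}^{t_1-1}y_i\,P_{i,t_1}(a)+x_{t_0}\,P_{t_0,t_1}(a).$$
Since each denominator $2j-1+av_{j-1}>0$ on $(-1,\infty)$, every $h_j$ is $C^\infty$ there and $f_{t_0,t_1}$, being a finite sum of products of the $h_j$, is $C^2$. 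Differentiating gives $m_j'(a)=\frac{2j-1-v_{j-1}}{(2j-1+av_{j-1})^2}>0$ (as $v_{j-1}\le j<2j-1$) and $m_j''<0$, with $|m_j|,m_j',|m_j''|=\Theta(1/j)$ uniformly for $a$ in compact subsets of $(-1,\infty)$; in particular each $h_j$ strictly decreases in $a$, and since $y_i\ge 0$ and $x_{t_0}\ge 1$, $f_{t_0,t_1}$ is strictly decreasing. The second estimate is a polynomial decay bound for $P_{i,t_1}$: from (\ref{1.1p5}), $m_j\asymp 1/j$, so $\log P_{i,t_1}=-\sum_{j=i+1}^{t_1}(2-y_j)m_j+O(\sum_j m_j^2)$ lies between $-c_2\log(t_1/i)$ and $-c_1\log(t_1/i)$ for constants $c_1,c_2>0$, i.e.\ $c(i/t_1)^{c_2}\le P_{i,t_1}\le C(i/t_1)^{c_1}$, and in particular $P_{i,t_1}\asymp 1$ on any window $[\beta t_1,\gamma t_1]$ with $0<\beta<\gamma\le 1$ fixed.

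For part (1) the upper bound is immediate from $x_{t_1}\le v_{t_1}\le t_1+1$, whence $f_{t_0,t_1}(a)\le t_1+1$. For the lower bound I distinguish two cases. If $t_0/t_1\to c_0>0$, then $x_{t_0}=\Theta(t_0)=\Theta(t_1)$ with probability tending to one by Lemma~\ref{t1}, and the last term alone gives $f_{t_0,t_1}\ge x_{t_0}P_{t_0,t_1}=\Theta(t_1)$. If $t_0/t_1\to 0$, fix $\beta<p$; then the window $[\beta t_1,t_1]$ lies inside $[t_0,t_1]$ eventually and, by Assumption~\ref{assumptionA2}, $\sum_{i=\beta t_1}^{t_1-1}y_i=v_{t_1-1}-v_{\beta t_1}\ge(p-\beta)t_1-O(1)=\Theta(t_1)$, so $f_{t_0,t_1}\ge P_{\beta t_1,t_1}\sum_{i}y_i=\Theta(t_1)$.

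For part (2) I would differentiate through the product: $P_{i,t_1}'=P_{i,t_1}S_i$ with $S_i=\sum_{j=i+1}^{t_1}h_j'/h_j$, and $P_{i,t_1}''=P_{i,t_1}(S_i^2+S_i')$. Because $m_j'>0$ one has $h_j'/h_j<0$ with $|h_j'/h_j|=\Theta(1/j)$, hence $S_i<0$ and $|S_i|=\Theta(\log(t_1/i))$; a second logarithmic differentiation, using $m_j''<0$, gives $S_i'>0$ with $S_i'=\Theta(\log(t_1/i))$. The decisive feature is that every term of $f'=\sum_i y_iP_{i,t_1}'+x_{t_0}P_{t_0,t_1}'$ is negative and every term of $f''=\sum_i y_iP_{i,t_1}''+x_{t_0}P_{t_0,t_1}''$ is positive, so there is no cancellation. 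The upper bounds in (\ref{1.00t41})--(\ref{1.00t42}) then follow from $P_{i,t_1}\le C(i/t_1)^{c_1}$ dominating the factors $|S_i|\sim\log(t_1/i)$ and $S_i^2+S_i'\sim\log^2(t_1/i)$, since $\sum_i (i/t_1)^{c_1}\log^k(t_1/i)=O(t_1)$; the matching lower bounds come from restricting to a window $[\beta t_1,\gamma t_1]$, where $P_{i,t_1},|S_i|,S_i^2\asymp 1$ and $\sum_i y_i\asymp t_1$.

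For part (3) I would use the martingale-difference representation behind Theorem~\ref{t2}: with $z_i$ as in Lemma~\ref{l2},
$$g_{t_0,t_1}(a)=\Bigl(\prod_{j=1}^{t_1}h_j^2\Bigr)\sum_{i=t_0+1}^{t_1}E(z_i^2\mid\Gamma_{t_0}),$$
so the conditional second-moment bounds (\ref{1.1p6}) and (\ref{1.1p8}) already established for Lemma~\ref{t1}, applied on a window, yield (\ref{1.00t41g}), i.e.\ $g_{t_0,t_1}=\Theta(t_1)$ with probability tending to one, exactly as in (\ref{1.1p9}); differentiating this expression in $a$ and tracking signs as in part (2) gives (\ref{1.01t42g}). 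I expect the main obstacle to be precisely this two-sided control of the derivatives: the upper bounds require the polynomial decay of $P_{i,t_1}$ to beat the logarithmic blow-up of $S_i$ for small $i$, while the lower bounds hinge on the no-cancellation observation together with the window argument, and every estimate must be made uniform in $a$ on compact sets and shown to hold with probability tending to one, which is where the concentration $x_{t_0}=\Theta(t_0)$ and $x_{i-1}=\Theta(i)$ from Lemma~\ref{t1} and Theorem~\ref{t2} enters.
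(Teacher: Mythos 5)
Your proposal is correct, but it proves the heart of the lemma, item (2), by a genuinely different route from the paper. You solve the scalar recursion from Lemma \ref{l1} in closed form, $f_{t_0,t_1}(a)=y_{t_1}+\sum_i y_iP_{i,t_1}(a)+x_{t_0}P_{t_0,t_1}(a)$ with $P_{i,t_1}=\prod_{j=i+1}^{t_1}h_j$, and control the derivatives by logarithmic differentiation ($P_{i,t_1}'=P_{i,t_1}S_i$, $P_{i,t_1}''=P_{i,t_1}(S_i^2+S_i')$) together with the sign facts $h_j'<0$ and $(\log h_j)''\geq 0$ (which do check out: $m_j''=-2(2j-1-v_{j-1})v_{j-1}(2j-1+av_{j-1})^{-3}\leq 0$ and $-m_j''(1-m_j)\geq (m_j')^2$ for $a\geq-1$, $v_{j-1}\geq1$), so that $f'$ and $f''$ have no cancellation, and then a two-sided window argument using $P_{i,t_1}\asymp (i/t_1)^{\Theta(1)}$. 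The paper never writes $f'$ or $f''$ explicitly: it works recursively in $t$ at fixed $a$, telescoping $|f_{t_0,t}'(a)|\leq|f_{t_0,t-1}'(a)|+2$ for the upper bound and deriving a recursion for the ratio $|f_{t_0,t}'(a)|/f_{t_0,t}(a)$ that gains $D_a/t$ per step while the ratio stays below a threshold $C_a$, which combined with part (1) gives the lower bound; and for part (1) it shortcuts through the contraction $|E(x_t\mid x_{t_0})-Ex_t|\leq|x_{t_0}-Ex_{t_0}|=O_p(t_0^{1/2})$ plus Lemma \ref{t1}, where you instead rerun the window argument on the conditional closed form. Your route buys explicit structure (term-by-term signs, concrete decay exponents, and an actual proof of the second-derivative bound (\ref{1.00t42}), which the paper only declares ``similar''), at the cost of needing two-sided asymptotics for $P_{i,t_1}$ and $S_i$; the paper's recursions avoid all product estimates and apply verbatim at the endpoint $a=-1$, which your ``compact subsets of $(-1,\infty)$'' technically excludes (harmless, since $2j-1-v_{j-1}\geq j-1$ keeps the denominators alive at $a=-1$; likewise the degenerate factor $h_1=0$ when $t_0=0$ only kills the $x_{t_0}$ term). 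For your lower bounds, do state the window constants: you need $\gamma<1$ so that $|S_i|\geq c\log(1/\gamma)>0$ on $[\beta t_1,\gamma t_1]$, and $\beta<p\gamma$ so that Assumption \ref{assumptionA2} gives $\sum_{i\in[\beta t_1,\gamma t_1]}y_i\geq(p\gamma-\beta)t_1-O(1)$. One caveat on part (3): your martingale-difference identity for $g$ is exactly the paper's variance computation from Lemma \ref{l2} and (\ref{1.1p6})--(\ref{1.1p9}) and gives (\ref{1.00t41g}) (modulo noting $E(z_i^2\mid\Gamma_{t_0})=E[E(z_i^2\mid\Gamma_{i-1})\mid\Gamma_{t_0}]$ and invoking part (1) to bound $E(x_{i-1}\mid\Gamma_{t_0})/i$ below), but ``differentiating and tracking signs'' does not really deliver (\ref{1.01t42g}): the summands $E(z_i^2\mid\Gamma_{t_0})$ depend on $a$ through the conditional law of $x_{i-1}$, i.e.\ through both $f_{t_0,i-1}(a)$ and the conditional second moment, so the sign of $g'$ is not a term-by-term consequence of $m_j'>0$. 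Since the paper itself disposes of (\ref{1.00t42})--(\ref{1.01t42g}) with ``can be proved similarly,'' your sketch matches the paper's own level of detail at its weakest point rather than falling below it.
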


\begin{proof}[Proof of Lemma \ref{ll1}]
Recall that $m_t=\frac{1+a_t}{2t-1+a_tv_{t-1}}=\frac{1+a}{2t-1+av_{t-1}}$ for $0 \leq t_0 \leq t \leq t_1$. One can verify that $m_t$ is an increasing and twice differentiable function at $a$ for $-1 \leq a < \infty$. From Lemma \ref{l1} one can find that $f_{t_0,t_1}(a)=E(x_{t_1}\mid\Gamma_{t_0})$ is a decreasing and twice differentiable function at $a$. It follows that $f_{t_0,t_1}(a)$ is a decreasing and twice differentiable function at $a$ for $-1 \leq a < \infty$.

We now prove (\ref{1.00t40}). Recalling Lemma \ref{l1} we can find
\begin{equation*}
|E(x_{t}\mid x_{t_0})-Ex_{t}| \leq |Ex_{t_0}-x_{t_0}|.
\end{equation*}
From Lemma \ref{t1} and Theorem \ref{t2}, $|Ex_{t_0}-x_{t_0}|=O_p(t_0^{1/2})$. This, together with Lemma \ref{t1}, implies (\ref{1.00t40}).

We next consider (\ref{1.00t41})-(\ref{1.01t42g}). Since $f_{t_0,t_1}(a)$ is a decreasing function,  $f_{t_0,t_1}'(a) \leq 0$. Note that $f_{t_0,t_0}(a)=x_{t_0}$ and $f'_{t_0,t_0}(a)=0$. When $t_0< t \leq t_1$
\begin{eqnarray}\label{1.00t41y0}
\nonumber |f_{t_0,t}'(a)|&&=\Big|(1-m_t)[1-(1-y_t)m_t]f_{t_0,t-1}'(a)-f_{t_0,t-1}(a)[1-2(1-y_t)m_t+(1-y_t)]\frac{\partial m_t}{\partial a}\Big|\\
\nonumber  &&\leq |f_{t_0,t-1}'(a)|+ f_{t_0,t-1}(a)\frac{\partial m_t}{\partial a}\\
\nonumber &&\leq |f_{t_0,t-1}'(a)|+t\frac{2t-1-v_{t-1}}{(2t-1+v_{t-1}a)^2}\\
\nonumber&&\leq |f_{t_0,t-1}'(a)|+\frac{t}{(t-1)}\\
\nonumber&&\leq |f_{t_0,t-1}'(a)|+2.
\end{eqnarray}
This, together with $f'_{t_0,t_0}(a)=0$, implies that
\begin{equation}\label{1.00t41y01}
|f_{t_0,t}'(a)|\leq 2(t-t_0).
\end{equation}
This ensures that
\begin{equation}\label{1.00t41r}
\limsup\limits_{t_1 \rightarrow \infty}\frac{|f_{t_0,t_1}'(a)|}{t_1}< \infty.
\end{equation}
Similarly, one can prove that
\begin{equation}\label{1.00t42r}
\limsup\limits_{t_1 \rightarrow \infty}\frac{|f_{t_0,t_1}''(a)|}{t_1}< \infty.
\end{equation}
It follows that
\begin{eqnarray}
\nonumber \frac{|f_{t_0,t}'(a)|}{f_{t_0,t}(a)}&&=\frac{|(1-m_t)[1-(1-y_t)m_t]f_{t_0,t-1}'(a)|}{f_{t_0,t}(a)}\\
\nonumber &&+\frac{f_{t_0,t-1}(a)[1-2(1-y_t)m_t+(1-y_t)]}{f_{t_0,t}(a)}\frac{2t-1-v_{t-1}}{(2t-1+av_{t-1})^2}\\
\nonumber  &&= \frac{|(1-m_t)[1-(1-y_t)m_t]f_{t_0,t-1}'(a)|}{(1-m_t)[1-(1-y_t)m_t]f_{t_0,t-1}(a)+y_t}\\
\nonumber &&+\frac{f_{t_0,t-1}(a)[1-2(1-y_t)m_t+(1-y_t)]}{(1-m_t)[1-(1-y_t)m_t]f_{t_0,t-1}(a)+y_t}\frac{2t-1-v_{t-1}}{(2t-1+av_{t-1})^2}\\
\nonumber&& \geq \frac{|f_{t_0,t-1}'(a)|}{f_{t_0,t-1}(a)}- \frac{|f_{t_0,t-1}'(a)|}{f_{t_0,t-1}(a)}\frac{y_t}{(1-m_t)[1-(1-y_t)m_t]f_{t_0,t-1}(a)+y_t}\\
\nonumber&&+\frac{f_{t_0,t-1}(a)[1-2(1-y_t)m_t+(1-y_t)]}{(1-m_t)[1-(1-y_t)m_t]f_{t_0,t-1}(a)+y_t}\frac{2t-1-v_{t-1}}{(2t-1+av_{t-1})^2}.
\end{eqnarray}
Then
\begin{eqnarray}\label{1.00t41y0aa}
&&\frac{|f_{t_0,t}'(a)|}{f_{t_0,t}(a)}-\frac{|f_{t_0,t-1}'(a)|}{f_{t_0,t-1}(a)}\\
\nonumber &&\geq\frac{1}{(1-m_t)[1-(1-y_t)m_t]f_{t_0,t-1}(a)+y_t}\\
\nonumber&&\{\frac{2t-1-v_{t-1}}{(2t-1+av_{t-1})^2}f_{t_0,t-1}(a)[1-2(1-y_t)m_t+(1-y_t)]-y_t\frac{|f_{t_0,t-1}'(a)|}{f_{t_0,t-1}(a)}\}.
\end{eqnarray}

This, together with (\ref{1.00t40}), implies that there are two positive constant $C_a$ and $D_a$ such that
when $$\frac{|f_{t_0,t-1}'(a)|}{f_{t_0,t-1}(a)}<C_a$$
we have
$$\frac{|f_{t_0,t}'(a)|}{f_{t_0,t}(a)} \geq \frac{|f_{t_0,t-1}'(a)|}{f_{t_0,t-1}(a)}+\frac{D_a}{t}.$$
Since $\lim\limits_{t_1 \rightarrow \infty}\frac{t_0}{t_1}<1$ we have
\begin{equation}\label{1.00t41l}
0<\liminf\limits_{t_1 \rightarrow \infty}\frac{|f_{t_0,t_1}'(a)|}{t_1}.
\end{equation}
This, together with (\ref{1.00t41r}), concludes (\ref{1.00t41}).  (\ref{1.00t42})-(\ref{1.01t42g}) can be proved similarly.

\end{proof}

\begin{lemma}\label{t4}
Suppose that Assumptions \ref{assumptionA1}-\ref{assumptionA2} hold, $a_t=a$ for $0 \leq t_0 \leq t \leq t_1$ and that $\lim\limits_{t_1 \rightarrow \infty}\frac{t_0}{t_1}<1$.  Then the following results hold:
\begin{itemize}
\item $\hat{a}$ is unique and $f_{t_0,t_1}(\hat{a})=x_{t_1}$ with probability tending to one.
\item  $\frac{f_{t_0,t_1}'(\hat{a})(\hat{a}-a)}{g^{1/2}_{t_0,t_1}(\hat{a})}$ converges weakly to standard normal distribution.
  \end{itemize}
\end{lemma}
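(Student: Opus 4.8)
The plan is to treat the two assertions in turn, using throughout the two-sided order bounds recorded in Lemma~\ref{ll1}. For the first assertion, recall from Lemma~\ref{ll1} that, with probability tending to one, $f_{t_0,t_1}(\cdot)$ is continuous and strictly decreasing on $(-1,\infty)$, with $|f_{t_0,t_1}'(a)|$ of exact order $t_1$. Strict monotonicity forces $a\mapsto(f_{t_0,t_1}(a)-x_{t_1})^2$ to have at most one zero, and any such zero is the unique global minimizer in (\ref{1.01t42gAE}); hence it suffices to show that $x_{t_1}$ lies in the range of $f_{t_0,t_1}$ with probability tending to one. Writing $a$ for the true value, we have $f_{t_0,t_1}(a)=E(x_{t_1}\mid\Gamma_{t_0})$ by definition, while Theorem~\ref{t2} together with $g_{t_0,t_1}(a)=O_p(t_1)$ from Lemma~\ref{ll1} gives $x_{t_1}-f_{t_0,t_1}(a)=O_p(t_1^{1/2})$. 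Since $|f_{t_0,t_1}'|$ is bounded below by a multiple of $t_1$ on a fixed neighborhood of $a$, the image of that neighborhood under $f_{t_0,t_1}$ is an interval of length of order $t_1$ centred near $E(x_{t_1}\mid\Gamma_{t_0})$, and it therefore contains $x_{t_1}$ with probability tending to one. This produces the unique root $\hat a$ with $f_{t_0,t_1}(\hat a)=x_{t_1}$ and, via the same lower bound on $|f_{t_0,t_1}'|$, the rate $|\hat a-a|=O_p(t_1^{-1/2})$.

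For the second assertion I would start from $f_{t_0,t_1}(\hat a)=x_{t_1}$ and Taylor expand about $a$,
\begin{equation*}
x_{t_1}-E(x_{t_1}\mid\Gamma_{t_0})=f_{t_0,t_1}(\hat a)-f_{t_0,t_1}(a)=f_{t_0,t_1}'(a)(\hat a-a)+\frac{1}{2}f_{t_0,t_1}''(\xi)(\hat a-a)^2,
\end{equation*}
with $\xi$ between $a$ and $\hat a$. Dividing by $g_{t_0,t_1}^{1/2}(a)=[Var(x_{t_1}\mid\Gamma_{t_0})]^{1/2}$, the left-hand side converges in distribution to $N(0,1)$ by Theorem~\ref{t2}. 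The quadratic remainder is negligible: by $|f_{t_0,t_1}''|=O_p(t_1)$, $|\hat a-a|=O_p(t_1^{-1/2})$ and $g_{t_0,t_1}^{1/2}(a)$ of order $t_1^{1/2}$, it has order $t_1\cdot t_1^{-1}\cdot t_1^{-1/2}=O_p(t_1^{-1/2})$ and tends to zero. Consequently $f_{t_0,t_1}'(a)(\hat a-a)/g_{t_0,t_1}^{1/2}(a)$ converges in distribution to $N(0,1)$.

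Finally I would pass from the true-value normalizers to their estimated counterparts by Slutsky's theorem. Because $|\hat a-a|=O_p(t_1^{-1/2})$ while $|f_{t_0,t_1}''|$ and $|g_{t_0,t_1}'|$ are both of order $t_1$, a one-term expansion gives $f_{t_0,t_1}'(\hat a)-f_{t_0,t_1}'(a)=O_p(t_1^{1/2})$ and $g_{t_0,t_1}(\hat a)-g_{t_0,t_1}(a)=O_p(t_1^{1/2})$; dividing by the order-$t_1$ quantities $f_{t_0,t_1}'(a)$ and $g_{t_0,t_1}(a)$ shows that $f_{t_0,t_1}'(\hat a)/f_{t_0,t_1}'(a)$ and $g_{t_0,t_1}^{1/2}(\hat a)/g_{t_0,t_1}^{1/2}(a)$ both tend to one in probability. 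Combining these ratios with the previous display gives the stated convergence of $f_{t_0,t_1}'(\hat a)(\hat a-a)/g_{t_0,t_1}^{1/2}(\hat a)$ to $N(0,1)$. The main obstacle is that $f_{t_0,t_1}$, $f_{t_0,t_1}'$ and $g_{t_0,t_1}$ are themselves random through $x_{t_0}$ and the observed path $\{v_t\}$, so the whole argument must be run on the probability-one event of Lemma~\ref{ll1} where the two-sided order bounds hold, and the rate $|\hat a-a|=O_p(t_1^{-1/2})$ must be secured before the quadratic remainder can be controlled—this coupling between the rate and the lower bound on $|f_{t_0,t_1}'|$ is the delicate point.
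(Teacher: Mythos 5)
Your proposal is correct and follows essentially the same route as the paper's own proof: uniqueness and existence of the root $\hat a$ via strict monotonicity of $f_{t_0,t_1}$, the intermediate value theorem and the order bounds of Lemma~\ref{ll1} (the paper verifies the range slightly differently, checking $f_{t_0,t_1}(-1)=x_{t_0}+v_{t_1}-v_{t_0}\geq x_{t_1}$ and picking a far point $c_a$ with $f_{t_0,t_1}(c_a)<f_{t_0,t_1}(a)-c_1t_1$, while you use the derivative lower bound on a neighborhood of $a$), then the rate $|\hat a-a|=O_p(t_1^{-1/2})$ from a mean-value expansion, and finally the CLT from Theorem~\ref{t2} combined with a second-order Taylor expansion and a Slutsky-type replacement of $f_{t_0,t_1}'(a)$ and $g_{t_0,t_1}(a)$ by their values at $\hat a$. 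The only cosmetic difference is that you expand around the true value $a$ and convert $f_{t_0,t_1}'(a)$ to $f_{t_0,t_1}'(\hat a)$ using the bound on $f_{t_0,t_1}''$, whereas the paper centers its expansion at $\hat a$ directly; both yield the same conclusion.
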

\begin{proof}[Proof of Lemma \ref{t4}]

We just need to show that there is the unique $\hat{a}$ such that $f_{t_0,t_1}(\hat{a})=x_{t_1}$ with probability tending to one and $\frac{f_{t_0,t_1}'(\hat{a})(\hat{a}-a)}{g^{1/2}_{t_0,t_1}(\hat{a})}$ converges to standard normal distribution.
 Note that $f_{t_0,t_1}(a)$ is continuous. We first prove
 $$\lim\limits_{t_1 \rightarrow \infty}P( \exists -1 < c_a < \infty, f_{t_0,t_1}(c_a) \leq x_{t_1} \leq f_{t_0,t_1}(-1) )=1.$$  As we know, $f_{t_0,t_1}(-1)=x_{t_0}+v_{t_1}-v_{t_0} \geq x_{t_1}$. So it suffices to prove
 $$\lim\limits_{t_1 \rightarrow \infty}P( \exists -1 < c_a < \infty, f_{t_0,t_1}(c_a) \leq x_{t_1} )=1.$$ From the definition of $f_{t_0,t_1}(a)$, we can find $c_a$ and $c_1>0$ such that $f_{t_0,t_1}(c_a)<f_{t_0,t_1}(a)-c_1t_1$. From Lemma \ref{t1} and Theorem \ref{t2}, we can find that $$\lim\limits_{t_1 \rightarrow \infty}P( |x_{t_1}-f_{t_0,t_1}(a)| \leq c_1t_1 )=\lim\limits_{t_1 \rightarrow \infty}P( |x_{t_1}-E(x_{t_1}\mid\Gamma_{t_0})| \leq c_1t_1 )=1.$$We conclude that $\lim\limits_{t_1 \rightarrow \infty}P( \exists -1 < c_a < \infty, f_{t_0,t_1}(c_a) \leq x_{t_1} )=1$. Due to the continuity of $f_{t_0,t_1}(a)$ and $f_{t_0,t_1}'(a)<0$ with probability tending to one we conclude that there is the unique $\hat{a}$ such that $f_{t_0,t_1}(\hat{a})=x_{t_1}$ with probability tending to one.

Recalling Theorem \ref{t2}, $\frac{f_{t_0,t_1}(\hat{a})-f_{t_0,t_1}(a)}{g^{1/2}_{t_0,t_1}(a)}=\frac{x_t-E(x_t\mid\Gamma_{t_0})}{[Var(x_t\mid\Gamma_{t_0})]^{1/2}}$ weakly converges to standard normal distribution. Write
\begin{equation}\label{1.2t4a}
\frac{f_{t_0,t_1}(\hat{a})-f_{t_0,t_1}(a)}{g^{1/2}_{t_0,t_1}(a)}=\frac{f_{t_0,t_1}'(\xi_0)(\hat{a}-a)}{g^{1/2}_{t_0,t_1}(a)},
\end{equation}
where $\xi_0$ is between $a$ and $\hat{a}$.
This, together with (\ref{1.00t41}) and (\ref{1.00t41g}), ensures that
\begin{equation}\label{1.2t4ap}
\hat{a}-a=O_p(t_1^{-1/2}).
\end{equation}
Also write
\begin{equation}\label{1.2t4at}
\frac{f_{t_0,t_1}(\hat{a})-f_{t_0,t_1}(a)}{g^{1/2}_{t_0,t_1}(a)}= \frac{f_{t_0,t_1}'(\hat{a})(\hat{a}-a)-f_{t_0,t_1}''(\xi_1)\frac{(\hat{a}-a)^2}{2}}{g^{1/2}_{t_0,t_1}(\hat{a})}\frac{g^{1/2}(\hat{a})}{g^{1/2}(a)},
\end{equation}
where $\xi_1$ is between $a$ and $\hat{a}$.
From (\ref{1.00t41g}), (\ref{1.01t42g}) and, (\ref{1.2t4ap}), we have $\frac{g(\hat{a})}{g(a)}=1+o_p(1)$. From (\ref{1.00t42}), (\ref{1.00t41g}) and (\ref{1.2t4ap}), it follows that
$$\frac{f_{t_0,t_1}''(\xi_1)\frac{(\hat{a}-a)^2}{2}}{g^{1/2}_{t_0,t_1}(\hat{a})}=o_p(1).$$ We then conclude that $\frac{f_{t_0,t_1}'(\hat{a})(\hat{a}-a)}{g^{1/2}_{t_0,t_1}(\hat{a})}$ weakly converges to standard normal distribution.

\end{proof}
From Theorem \ref{t2} and Lemma \ref{t4}, we can prove  Theorem \ref{t5}.

\begin{proof}[Proof of Theorem \ref{t6}]
We first prove (a). Without loss of generality, we assume that $\alpha_1>\alpha_2$. Then we can get a new function $f_{S,R,T}(\alpha_1,\alpha_2)=E(x_T\mid\Gamma_{S})$. As in the proof of Lemma \ref{t4}, we can find that $\frac{\partial f_{S,R,T}(\alpha_1,\alpha_2)}{\partial \alpha_1} < 0$ and $\frac{\partial f_{S,R,T}(\alpha_1,\alpha_2)}{\partial \alpha_2} < 0$ with probability tending to one. It follows that
$$f_{S,T}(\alpha_1)=f_{S,R,T}(\alpha_1,\alpha_1) < f_{S,R,T}(\alpha_1,\alpha_2) < f_{S,R,T}(\alpha_2,\alpha_2)=f_{S,T}(\alpha_2).$$
Furthermore, we can find that \begin{equation}\label{1.00t61}
0<\liminf\limits_{T \rightarrow \infty}\frac{|\frac{\partial f_{S,R,T}(\alpha_1,\alpha_2)}{\partial \alpha_1}|}{T}< \limsup\limits_{T \rightarrow \infty}\frac{|\frac{\partial f_{S,R,T}(\alpha_1,\alpha_2)}{\partial \alpha_1}|}{T}< \infty
\end{equation}
and
\begin{equation}\label{1.00t62}
0<\liminf\limits_{T \rightarrow \infty}\frac{|\frac{\partial f_{S,R,T}(\alpha_1,\alpha_2)}{\partial \alpha_2}|}{T}< \limsup\limits_{T \rightarrow \infty}\frac{|\frac{\partial f_{S,R,T}(\alpha_1,\alpha_2)}{\partial \alpha_2}|}{T}< \infty.
\end{equation}
These, together with Theorem \ref{t2}, imply that $$f_{S,T}(\alpha_1) < x_T < f_{S,T}(\alpha_2)$$ and
\begin{equation}\label{1.00t63}
\liminf\limits_{T \rightarrow \infty}\frac{\min\{|x_T-f_{S,T}(\alpha_1)|,  |x_T-f_{S,T}(\alpha_2)|\}}{T}>0
\end{equation}
with probability tending to one. So $\hat{a}$ is the unique and $f_{S,T}(\hat{a})=x_{T}$ with probability tending to one. (\ref{1.00t61})-(\ref{1.00t63}) and (\ref{1.00t41}) also imply that there is $c>0$ such that $\min \{|\hat{a}-\alpha_1|, |\hat{a}-\alpha_2| \}>c$ with probability tending to one.

The proof of (b) is similar to the one of (a). Without loss of generality, we assume that $\alpha_1>\alpha_2$. Then we can get a new function $f_{S,R,T}(\alpha_1,\alpha_2)=E(x_T\mid\Gamma_{S})$. As in the proof of Lemma \ref{t4}, we can find that $\frac{\partial f_{S,R,T}(\alpha_1,\alpha_2)}{\partial \alpha_1} < 0$ and $\frac{\partial f_{S,R,T}(\alpha_1,\alpha_2)}{\partial \alpha_2} < 0$ with probability tending to one. It follows that
$$f_{S,T}(\alpha_1)=f_{S,R,T}(\alpha_1,\alpha_1) < f_{S,R,T}(\alpha_1,\alpha_2) < f_{S,R,T}(\alpha_2,\alpha_2)=f_{S,T}(\alpha_2).$$
Furthermore, we can find that
\begin{equation}\label{1.00t72}
0<\liminf\limits_{T \rightarrow \infty}\frac{|\frac{\partial f_{S,R,T}(\alpha_1,\alpha_2)}{\partial \alpha_2}|}{T}< \limsup\limits_{T \rightarrow \infty}\frac{|\frac{\partial f_{S,R,T}(\alpha_1,\alpha_2)}{\partial \alpha_2}|}{T}< \infty.
\end{equation}
Now we note that $\lim\limits_{T \rightarrow \infty}\frac{R-S}{T}=0$ and
\begin{equation}\label{1.00t71}
\lim\limits_{T \rightarrow \infty}\frac{|\frac{\partial f_{S,R,T}(\alpha_1,\alpha_2)}{\partial \alpha_1}|}{T}=0.
\end{equation}
This, together with (\ref{1.00t72}) and Theorem \ref{t2}, implies that $$f_{S,T}(\alpha_1) \leq x_T < f_{S,T}(\alpha_2)$$ and
\begin{equation}\label{1.00t73}
\lim\limits_{T \rightarrow \infty}\frac{|x_T-f_{S,T}(\alpha_2)|}{T}=0
\end{equation}
with high probability. So $\hat{a}$ is the unique and $f_{S,T}(\hat{a})=x_{T}$ with probability tending to one. (\ref{1.00t72})-(\ref{1.00t73}) and (\ref{1.00t41}) also imply that  $\hat{a} \rightarrow \alpha_2$ in probability.

The proof of (c) is similar to the one of (b). We omit the details.
\end{proof}
\subsection*{Proof of Theorems about change points}
\begin{proof}[Proof of Theorem \ref{t11}]
When $a_t=a$ for any $t \in B_{j-1},B_j,B_{j+1}$, Theorem \ref{t2} shows that
 $$\Big(\frac{f_{s_{j-1},t_{j-1}}'(\hat{\alpha}_{j-1})(\hat{\alpha}_{j-1}-a)}{g^{1/2}_{s_{j-1},t_{j-1}}(\hat{\alpha}_{j-1})} ,\frac{f_{s_{j+1},t_{j+1}}'(\hat{\alpha}_{j+1})(\hat{\alpha}_{j+1}-a)}{g^{1/2}_{s_{j+1},t_{j+1}}(\hat{\alpha}_{j+1})}\Big)$$ converges to $N(0, I_2)$ in distribution.
Thus, $L_j$ weakly converges to the chi-square distribution with 1 degree freedom.

Similarly, when $R \in B_j$ is a change point. Assumption \ref{A3} ensures there is no more change point in  $B_{j-1},B_j,B_{j+1}$. Then $a_t=\alpha_1$ for $t \in B_{j-1}$ and $a_t=\alpha_2$ for $t \in B_{j+1}$. Theorem \ref{t2} shows that
 $$\Big(\frac{f_{s_{j-1},t_{j-1}}'(\hat{\alpha}_{j-1})(\hat{\alpha}_{j-1}-\alpha_1)}{g^{1/2}_{s_{j-1},t_{j-1}}(\hat{\alpha}_{j-1})} ,\frac{f_{s_{j+1},t_{j+1}}'(\hat{\alpha}_{j+1})(\hat{\alpha}_{j+1}-\alpha_2)}{g^{1/2}_{s_{j+1},t_{j+1}}(\hat{\alpha}_{j+1})}\Big)$$ converges to $N(0, I_2)$ in distribution.
This, together with $|\alpha_1-\alpha_2|>0$, (\ref{1.00t41}) and (\ref{1.00t41g}), concludes $L_j>cT$ for some $c>0$ with the probability tending to 1.

Moreover, when $R \in B_j$ is a change point, Assumption \ref{A3} ensures there is no more change point in  $\cup_{k=-3}^3B_{j+k}$. Thus $L_{j-2}$ and $L_{j+2}$ weakly converge to the chi-square distribution with 1 degree freedom. This, together with $L_j>cT$ for some $c>0$, completes the proof.

\end{proof}

\begin{proof}[Proof of Theorem \ref{t12}]
The second part of Theorem \ref{t11} shows that each change point produces a local maximum that is larger than $cT$ for some $c>0$ with the probability tending to 1. Thus $\lim_{T \rightarrow \infty}P(\hat{s}_p \geq s_p)=1$.

Similarly, the first part of Theorem \ref{t11} shows that $L_j$ is $O_p(1)=o_p(c_T)$ when $a_t=a$ for any $t \in B_{j-1},B_j,B_{j+1}$. Thus  $\lim_{T \rightarrow \infty}P(\hat{s}_p \leq s_p)=1$. We complete the proof.

\end{proof}
\begin{proof}[Proof of Theorem \ref{cplocdect1thm}]
Without loss of generality, we assume that $\alpha_1>\alpha_2$. Theorem \ref{t6} shows that
$$\min \{\alpha_1-\hat{a}, \hat{a}-\alpha_2 \}>c>0,$$
with probability tending to one when $T \rightarrow \infty$.
$\hat{x}_t=f_{S,t}(\hat{a})$ for $S+1 \leq  t \leq T$.
\begin{eqnarray*}
&&|x_t-\hat{x}_t|=|E(x_t\mid\Gamma_{S})-f_{S,t}(\hat{a})+[x_t-E(x_t\mid\Gamma_{S})]|
\end{eqnarray*}
Then
\begin{eqnarray*}
|E(x_t\mid\Gamma_{S})-f_{S,t}(\hat{a})|-|x_t-E(x_t\mid\Gamma_{S})|  \leq |x_t-\hat{x}_t|\leq |E(x_t\mid\Gamma_{S})-f_{S,t}(\hat{a})|+|x_t-E(x_t\mid\Gamma_{S})|.
\end{eqnarray*}
Using the martingale differences  in Lemma \ref{l2} and (\ref{1.1p6}), we can find that
\begin{eqnarray*}
\max_{t \in [S,T]}|x_t-E(x_t\mid\Gamma_{S})|=O_p(T^{1/2}\log T).
\end{eqnarray*}
These, together with (\ref{cplocdect100b}), complete the proof.

\end{proof}
\begin{proof}[Proof of Proposition \ref{proprr}]
Since $y_t=1$ for any $1 \leq t \leq T$,
\begin{eqnarray}\label{dde}
E(x_{t}\mid\Gamma_{S})=(1-\frac{1+a_t}{2t-1+a_tt})E(x_{t-1}\mid\Gamma_{S})+1,
\end{eqnarray}
for $S+1 \leq t \leq T$. Here $a_t=\alpha_1$ when $S+1 \leq t < R$ and $a_t=\alpha_2$ when $R \leq t \leq T$.
Without loss of generality, we assume that $\alpha_1>\alpha_2$. Theorem \ref{t6} shows that
$$\min \{\alpha_1-\hat{a}, \hat{a}-\alpha_2 \}>c>0,$$
with probability tending to one when $T \rightarrow \infty$.
Thus we only consider the case that
\begin{eqnarray}\label{inpropa1a2a}
\min \{\alpha_1-\hat{a}, \hat{a}-\alpha_2 \}>c>0.
\end{eqnarray}
 Since $E(x_S\mid\Gamma_{S})=x_{S}=f_{S,S}(\hat{a})$ and $\alpha_1-\hat{a}>c$, (\ref{dde}) implies that when $S+1 \leq t < R$,
\begin{eqnarray}\label{dde1}
&&f_{S,t}(\hat{a})-E(x_t\mid\Gamma_{S})>0.
\end{eqnarray}
 On the other hand, Theorem \ref{t2} and (\ref{1.01t42gAE}) imply that
$E(x_T\mid\Gamma_{S})=x_{T}+O_p(T^{1/2})$ and $f_{S,T}(\hat{a})=x_{T}$. Then $E(x_T\mid\Gamma_{S})-f_{S,T}(\hat{a})=O_p(T^{1/2})$.
 This, together with $y_t \equiv 1$, $E(x_S\mid\Gamma_{S})=x_{S}=f_{S,S}(\hat{a})$, (\ref{1.00t41}) and $\hat{a}-\alpha_2>c$, implies that if $T_2 \in (S,T]$ satisfies $E(x_{T_2}\mid\Gamma_{S})=f_{S,T_2}(\hat{a})$, then $T-T_2=O_p(T^{1/2})$. Let $T_3=\min\{S<t\leq T:E(x_{t}\mid\Gamma_{S})=f_{S,t}(\hat{a}) \}$. If $\{S<t\leq T:E(x_{t}\mid\Gamma_{S})=f_{S,t}(\hat{a}) \}$ is empty, let $T_3=T$.
Then we can divide $(S,T]$ into three parts:$[S+1,R)$, $[R,T_3)$ and $[T_3,T]$.

When $S+1 \leq t < R$,
\begin{eqnarray*}
&&f_{S,t}(\hat{a})-E(x_t\mid\Gamma_{S})>0.
\end{eqnarray*}

\begin{eqnarray}\label{rrch22}
&&f_{S,t}(\hat{a})-E(x_t\mid\Gamma_{S})-f_{S,t-1}(\hat{a})+E(x_{t-1}\mid\Gamma_{S})\\
\nonumber=&&\frac{1+\alpha_1}{2t-1+\alpha_1t}E(x_{t-1}\mid\Gamma_{S})-\frac{1+\hat{a}}{2t-1+\hat{a}t}f_{S,t-1}(\hat{a}).
\end{eqnarray}
\begin{eqnarray*}
&&\frac{f_{S,t}(\hat{a})}{E(x_t\mid\Gamma_{S})}-\frac{f_{S,t-1}(\hat{a})}{E(x_{t-1}\mid\Gamma_{S})}\\
=&&\frac{f_{S,t}(\hat{a})-E(x_t\mid\Gamma_{S})}{E(x_t\mid\Gamma_{S})}-\frac{f_{S,t-1}(\hat{a})-E(x_{t-1}\mid\Gamma_{S})}{E(x_{t-1}\mid\Gamma_{S})}\\
=&&\frac{1+\alpha_1}{2t-1+\alpha_1t}-\frac{1+\hat{a}}{2t-1+\hat{a}t}\frac{f_{S,t-1}(\hat{a})}{E(x_{t-1}\mid\Gamma_{S})}\\
+&&\frac{f_{S,t}(\hat{a})-E(x_t\mid\Gamma_{S})}{E(x_t\mid\Gamma_{S})}-\frac{f_{S,t}(\hat{a})-E(x_t\mid\Gamma_{S})}{E(x_{t-1}\mid\Gamma_{S})}\\
=&&\frac{1+\hat{a}}{2t-1+\hat{a}t}[\frac{1+\alpha_1}{1+\hat{a}}\frac{2t-1+\hat{a}t}{2t-1+\alpha_1t}-\frac{f_{S,t-1}(\hat{a})}{E(x_{t-1}\mid\Gamma_{S})}]\\
+&&\frac{f_{S,t}(\hat{a})-E(x_t\mid\Gamma_{S})}{E(x_t\mid\Gamma_{S})}-\frac{f_{S,t}(\hat{a})-E(x_t\mid\Gamma_{S})}{E(x_{t-1}\mid\Gamma_{S})}.
\end{eqnarray*}
Then
\begin{eqnarray*}
&&\frac{f_{S,t}(\hat{a})}{E(x_t\mid\Gamma_{S})}-\frac{f_{S,t-1}(\hat{a})}{E(x_{t-1}\mid\Gamma_{S})}\\
<&&\frac{1}{2}[\frac{1+\alpha_1}{1+\hat{a}}\frac{2t-1+\hat{a}t}{2t-1+\alpha_1t}-\frac{f_{S,t-1}(\hat{a})}{E(x_{t-1}\mid\Gamma_{S})}]
\end{eqnarray*}
for any $t \geq S+2$.
Then for some $S+2 \leq  t <R$,
if \begin{eqnarray}\label{rrch0}
\frac{f_{S,t-1}(\hat{a})}{E(x_{t-1}\mid\Gamma_{S})} < \frac{1+\alpha_1}{1+\hat{a}}\frac{2t-1+\hat{a}t}{2t-1+\alpha_1t},
\end{eqnarray}
we have
 \begin{eqnarray}\label{rrch1}
\frac{f_{S,t}(\hat{a})}{E(x_{t}\mid\Gamma_{S})} < \frac{1+\alpha_1}{1+\hat{a}}\frac{2t+1+\hat{a}(t+1)}{2t+1+\alpha_1(t+1)}.
\end{eqnarray}
Moreover,
\begin{eqnarray*}
&&\frac{f_{S,S+1}(\hat{a})}{E(x_{S+1}\mid\Gamma_{S})}\\
=&&\frac{1+\hat{a}}{2S+1+\hat{a}(S+1)}[\frac{1+\alpha_1}{1+\hat{a}}\frac{2S+1+\hat{a}(S+1)}{2S+1+\alpha_1(S+1)}-\frac{f_{S,S}(\hat{a})}{E(x_{S}\mid\Gamma_{S})}]\\
+&&\frac{f_{S,S+1}(\hat{a})-E(x_{S+1}\mid\Gamma_{S})}{E(x_t\mid\Gamma_{S})}-\frac{f_{S,S+1}(\hat{a})-E(x_{S+1}\mid\Gamma_{S})}{E(x_{S}\mid\Gamma_{S})}+\frac{f_{S,S}(\hat{a})}{E(x_{S}\mid\Gamma_{S})}\\
=&&\frac{1+\hat{a}}{2S+1+\hat{a}(S+1)}[\frac{1+\alpha_1}{1+\hat{a}}\frac{2S+1+\hat{a}(S+1)}{2S+1+\alpha_1(S+1)}-1]\\
+&&\frac{f_{S,S+1}(\hat{a})-E(x_{S+1}\mid\Gamma_{S})}{E(x_t\mid\Gamma_{S})}-\frac{f_{S,S+1}(\hat{a})-E(x_{S+1}\mid\Gamma_{S})}{E(x_{S}\mid\Gamma_{S})}+1\\
<&&\frac{1+\alpha_1}{1+\hat{a}}\frac{2S+1+\hat{a}(S+1)}{2S+1+\alpha_1(S+1)}.
\end{eqnarray*}
This, together with the discussion in (\ref{rrch0})-(\ref{rrch1}), implies that
%
%
%
%
%
\begin{eqnarray}\label{rrch0cc}
\frac{f_{S,t-1}(\hat{a})}{E(x_{t-1}\mid\Gamma_{S})} < \frac{1+\alpha_1}{1+\hat{a}}\frac{2t-1+\hat{a}t}{2t-1+\alpha_1t}
\end{eqnarray}
for $S+1 \leq t <R$.
This, together with (\ref{rrch22}), implies that
\begin{eqnarray}\label{rrch0cc1}
f_{S,t}(\hat{a})-E(x_t\mid\Gamma_{S})>f_{S,t-1}(\hat{a})-E(x_{t-1}\mid\Gamma_{S})
\end{eqnarray}
for $S+1 \leq t <R$.
Now we consider the second part $[R,T_3)$.
When $R \leq t < T_3$,
\begin{eqnarray*}
&&f_{S,t}(\hat{a})>E(x_t\mid\Gamma_{S})
\end{eqnarray*}
and
\begin{eqnarray}\label{rrch22vv}
&&f_{S,t}(\hat{a})-E(x_t\mid\Gamma_{S})-f_{S,t-1}(\hat{a})+E(x_{t-1}\mid\Gamma_{S})\\
\nonumber=&&\frac{1+\alpha_2}{2t-1+\alpha_1t}E(x_{t-1}\mid\Gamma_{S})-\frac{1+\hat{a}}{2t-1+\hat{a}t}f_{S,t-1}(\hat{a}).
\end{eqnarray}
These, together with $\frac{1+\alpha_2}{2t-1+\alpha_1t}<\frac{1+\hat{a}}{2t-1+\hat{a}t}$, implies
\begin{eqnarray}\label{rrch0cc2}
f_{S,t}(\hat{a})-E(x_t\mid\Gamma_{S})<f_{S,t-1}(\hat{a})-E(x_{t-1}\mid\Gamma_{S}).
\end{eqnarray}
For $R \leq t < T_3$.
(\ref{rrch0cc1}) and (\ref{rrch0cc2}) imply that $|f_{S,R-1}(\hat{a})-E(x_{R-1}\mid\Gamma_{S})|>|f_{S,t}(\hat{a})-E(x_t\mid\Gamma_{S})|$ for any $t \in (S,T_2]$ and $t \neq R-1$.
Moreover, (\ref{1.00t41}) implies that $|f_{S,R-1}(\hat{a})-E(x_{R-1}\mid\Gamma_{S})|>c_2T$ for some $c_2>0$. For any $t \in [T_3,T]$, $|f_{S,t}(\hat{a})-E(x_t\mid\Gamma_{S})|=O_p(T^{1/2})$ due to $T-T_3=O_p(T^{1/2})$, $f_{S,T_3}(\hat{a})=E(x_{T_3}\mid\Gamma_{S})$ and $f_{S,T}(\hat{a})=E(x_{T}\mid\Gamma_{S})+O_p(T^{1/2})$. We prove (\ref{cplocdect1aa}).

(\ref{rrch22}) and (\ref{rrch0cc}) imply $$f_{S,t}(\hat{a})-E(x_t\mid\Gamma_{S})-f_{S,t-1}(\hat{a})+E(x_{t-1}\mid\Gamma_{S})>c_3>0,$$
for $t \in [S+1,R)$. (\ref{rrch22vv}), (\ref{1.1m}), Theorem \ref{t2} and (\ref{1.00t40}) imply $$f_{S,t-1}(\hat{a})-E(x_{t-1}\mid\Gamma_{S})-f_{S,t}(\hat{a})+E(x_t\mid\Gamma_{S})>c_3>0,$$
for $t \in [R,T_3)$. These, together with $|f_{S,R-1}(\hat{a})-E(x_{R-1}\mid\Gamma_{S})|>c_2T$ and $|f_{S,t}(\hat{a})-E(x_t\mid\Gamma_{S})|=O_p(T^{1/2})$  for any $t \in [T_3,T]$, complete the proof of (\ref{cplocdect100b}).
\end{proof}



\bibliography{sample}

\end{document}